\providecommand{\U}[1]{\protect\rule{.1in}{.1in}}
\newtheorem{theorem}{Theorem}
\newtheorem{proposition}[theorem]{Proposition}
\newenvironment{proof}[1][Proof]{\noindent\textbf{#1.} }{\ \rule{0.5em}{0.5em}}
\begin{document}
\preprint{ }
\title[Death of the classical vibrating string model]{Resolution of the 300-Year-Old Vibrating String Controversy}
\author{Namik Ciblak}
\affiliation{Yeditepe University, Istanbul, Turkey}
\keywords{vibrating string, wave equation}
\pacs{46.70.Hg, 62.30.+d, 46.40.Cd}

\begin{abstract}
The dispute about the well-known 1D vibrating string model and its solutions,
known as "The Vibrating String Controversy", spanned the whole of 1700s and
involved a group of the most eminent scientists of the time. After that, the
model stood undisputed for over two centuries. In this study, it is shown that
not only this 300-year-old model cannot correspond to reality, but it is
theoretically not quite plausible, either. A new 2D model is developed
removing all the assumptions of the classical model. The result is a pair of
non-linear partial differential equations modeling 2D motions of a finite 1D
string. A theorem that can be used to determine the initial displacement
functions from the initial shape of the string is proven. The new model is
capable of representing initial conditions that cannot be handled in the
classical model. It also allows initially non-taut/non-slack strings and
self-intersecting shapes. The classical model and the non-taut strings emerge
as special limit cases. It is proven that pure transverse motions of a 1D
string are possible only in very rare cases. A theorem that sets the
conditions for pure transverse motions is also presented. Numerical studies of
interesting cases are presented in support of the new model. High-speed camera
experiments are also conducted, the results of which also support the new theory.

\end{abstract}
\volumeyear{year}
\volumenumber{number}
\issuenumber{number}
\eid{identifier}
\date{January 2015, first manuscript}
\startpage{1}
\endpage{23}
\maketitle

\section{INTRODUCTION}

About 267 years ago d'Alembert published his results on the vibrating string
problem (1747). He gave the model in the form of a now-well-known partial
differential equation (PDE), namely, the one dimensional wave equation:
$y_{xx}=y_{tt}$. He also found the general solution as $f(x+t)+f(x-t)$,
\cite{Dalembert1}, \cite{Dalembert2}. Following this, Euler published his
results (1748, 1749), in which, using d'Alembert's solution, he showed how to
construct the solution in terms of initial conditions, \cite{Euler}. A few
years later, Daniel Bernoulli presented his work on the subject, claiming that
any solution to the wave equation can be given as an infinite sum of a
fundamental sine function and its harmonics, although he did not know how to
determine the coefficients in the sum, \cite{BernoulliD1753}.

The communications between these three scientists were somewhat bitter and
seemed irreconcilable at the time. D'Alembert didn't like Euler's kinky
initial conditions, Euler argued with infinitesimals, Bernoulli insisted on
trigonometric series, and, neither d'Alembert nor Euler thought that the
trigonometric series were general enough. The matter was not resolved until
1829 when, based on Fourier's seminal work (1819), Dirichlet was able to end
the discussion by presenting his work on the convergence of Fourier series,
\cite{Fourier}, \cite{Dirichlet}. This otherwise fruitful activity in the
history of mathematics and classical physics is known as \emph{The Vibrating
String Controversy}. Detailed accounts of this debate are given in two
excellent articles: one by Wheeler and Crummett, \cite{Wheeler}, and the other
by Zeeman, \cite{Zeeman}.

What was known until the mid 1700s was comprised of the knowledge from the
antiquity (Pythagoreans and successors) and works of earlier contemporaries,
mostly concerning harmonics and the fundamental mode. Then, a slow but
persistent development of the vibrating string theory ensued. Most notable
achievements among these were the following.

\begin{itemize}
\item Joseph Sauveur's experimental work on harmonics (1700), \cite{Zeeman}.

\item Brook Taylor's (of Taylor series) discovery of the shape and the
frequency of the fundamental mode (1713), \cite{Taylor}.

\item Johann Sebastian Bach's work on musical intervals (1722), which he
demonstrated in his well-known collection of compositions called \emph{The
Well-Tempered Clavier}, \cite{Zeeman}.

\item Johann Bernoulli's method of finding the shape of the fundamental mode
from the infinite limit of lumped masses on an elastic string (1727),
\cite{BernoulliJ1732}.

\item Daniel Bernoulli's work on harmonics (1733, 1740), \cite{BernoulliD1732}%
, \cite{BernoulliD1740}.

\item Lagrange's work on sound (1759), \cite{Lagrange}.

\item Fourier's development of the infinite trigonometric series (1819),
\cite{Fourier}.

\item Dirichlet's work on the convergence of Fourier series (1829),
\cite{Dirichlet}.
\end{itemize}

For all who participated in the debate the main concerns were mostly about the
existence, the admissibility, and the generality of the functions proposed as
solutions. Otherwise, all of them, except d'Alembert, considered the model a
sufficiently faithful representation of the actual physical phenomenon,
especially as it pertains to stringed musical instruments.

The classical model of d'Alembert neglects the effects of bending and other
mechanical causes such as damping and rotational motion, which would have
rendered the model non-linear, or increased the degree of the differential
equation, at the least. It also assumes a constant tension, although it does
not introduce any material behavior.

All of these assumptions and omissions, the debaters defended by arguing small
deflections and slopes. Euler actually used the infinitesimal argument to show
the feasibility of admitting solutions with kinks, which left spatial
derivatives at such kinks undefined. Today, it is understood that such
discontinuities can be handled with special functions (distributions) or by
using $C^{\infty}$ approximations (e.g. Fourier series or Legendre polynomials
approximations to the Dirac Delta Function).

Nevertheless, d'Alembert objected to Euler's suggestions by arguing that the
assumption of small deviations from the unloaded equilibrium, which enabled
him to derive the wave equation, precluded any real physical connections. He
was probably trying to indicate that the model was of purely mathematical
nature, nothing to do with the actual physics of a vibrating string.

D'Alembert was both right and wrong. We know that models obtained by such
approximations, usually yielding linear models, can quite accurately predict
the corresponding physical phenomena within limits, as in the cases of linear
theory of elasticity, simple pendulum, linearized constitutive models of
numerous mechanical, electrical, and electromechanical elements, and many
others. So, he was wrong. Yet, he was right, too: the assumptions of the
classical model were so strict that it would only amount to wishful thinking
to expect the real string vibrations to be so described. This point is what we
make in this study, too.

However, more importantly, the classical model makes a very crucial
assumption: \emph{purely transverse motions}. This, no one seemed to contest.
In this study, this assumption will be shown to be indefensible and, further,
that \emph{its removal is the key to settling all other issues}.

Then, again, the question remains: why does the classical string model seem to
fit the experimental observations, anyway? Or, does it, really? There had been
numerous attempts to determine whether or not the classical model fits the
actual string vibrations. An account of these can be found in an article by
Armstead and Karls, \cite{Armstead2006}, in which they also present their own
experimental results. They use a vertical motion model with a variable
tension, based on Kreysig \cite{Kreysig1983} and Powers \cite{Powers}. This
will be another question that we will try to resolve. Indeed, we show by
actual experiments that the classical string model is simply untenable.

\subsection{Framework of the classical model}

We will now scrutinize the assumptions of the classical vibrating string
model. However, before doing this, we have to deal with the important
declaration of the classical model, which one may consider as a rule or
requirement, rather than an assumption. Namely, the requirement that all
motion is to be \textquotedblleft transverse\textquotedblright\ or
\textquotedblleft lateral\textquotedblright, which is the reason that the
problem is sometimes referred to as \textquotedblleft transverse vibrations of
\ldots\textquotedblright\ More correctly, in the classical model the points
are considered to move only in transverse (lateral, perpendicular) directions
\emph{with respect to the unloaded string at rest}.

Nevertheless, we will continue using the same terminology. Thus, if anyone
asks \textquotedblleft\textit{Can a string be made to execute transverse
motions?}\textquotedblright\ We understand that the inquiry is actually about
\textquotedblleft up/down\textquotedblright\ motion with respect to the
initial horizontal configuration.

One can imagine infinitely many constraints on each point of the string that
allow them only move in lateral directions. However, as d'Alembert did, one
also has to admit that such a contraption is almost impossible to achieve in
reality. Further, one would have to redefine the meaning of "free vibrations"
if such constraints are introduced. Nevertheless, we consider this as a
separate class of string motions and, for now, restrict the discussion to
transverse motions.

The basic assumptions of the classical string model with transverse vibrations
are as follows.

\begin{enumerate}
\item \textbf{Perfect Flexibility}. This basically implies that the string
exhibits no reaction to bending. This assumption is necessary to propose that
the only internal reaction to deformation is tension. Any attempt to include
bending will increase the degree of the PDE. Further, without small
deformations assumptions, the bending effect will be non-linear (see
\cite{Gottlieb1990}, \cite{Lai2008}, \cite{Rao2002}). As before, we may
consider this assumption as a matter of classification rather than an approximation.

\item \textbf{Constant Tension}. This is one of the most destructive
assumptions of the classical model because it deceives one into believing that
without it the model would become more complicated. We shall show here that
not only is it completely unnecessary, but it is not justifiable, either. That
is, one cannot develop a string model for transverse motions with a constant
tension assumption. However, once this assumption is removed, it will also be
necessary to remove certain others as explained below. There are studies that
allow variable tension, yet they fall into other traps, eventually resulting
in the classical model.

\item \textbf{Inextensibility}. Some models declare this assumption, the need
for which is difficult to understand simply because it is used nowhere. If the
tension is variable so has to be the length. Some authors who use energy
methods declare a constant length and go on to investigate behavior of the
integral $\int_{0}^{L_{0}}\sqrt{1+y_{x}^{2}}dx$ under the assumption of small
deformations and slopes, without any reference to the obvious contradiction
that the integral is the length of the string in motion and is varying. It
will be shown here that not only are the variable tension and the variable
length considerations necessary, but they are also bound to each other. We
cannot claim one without the other. There are many studies that consider the
string as perfectly elastic, yet they still assume all other assumptions of
the classical model, which, in the end, yields the classical model again (see,
for example, \cite{Weinstock1974}). It will be seen here that they are not
distinguishable from the classical model.

\item \textbf{Small Displacements and Slopes}. These would have been
acceptable assumptions, as if proposing a problem classification, had they not
been used to justify previous unjustifiable assumptions. One can easily accept
these while still allowing variable tension and length. The result would only
be small variations in tension and length. We show here that it is
straightforward to work out the classical model using these assumptions while
still rejecting the constancy of tension and/or length.

\item \textbf{Constancy of Density}. Since the mass of any segment, with an
initial length of $dx$, moves only up and down, the linear density (a
distribution) $dm/dx$ does stay constant in time. This is not an assumption,
but a consequence of the requirement of transverse motion, and of conservation
of mass. Spatial variation, however, is allowable and it does not effect the
final result. Note that the actual linear density is $dm/ds$, where $ds$ is
along the curve of the string, which may not be constant. Classical or other
similar models do not make this distinction between the two definitions of the
linear density, which is justified in the end because of small displacements
and slopes assumption.

\item \textbf{Constancy of Cross-sectional Area}. If the string is idealized
as a curve without thickness, the cross-sectional geometry becomes irrelevant,
which eliminates the issue. However, if a model closer to the reality is
desired then the variability of the area becomes important. Spatial variation
of the area can be handled as in the case of density. We have to point out
that for real strings the case of spatial variation of the area is far more
likely than the density case -- an unavoidable result of real manufacturing processes.
\end{enumerate}

In summary, in this study the assumptions of constant tension,
inextensibility, and, small deformations and slopes are removed. This brings
the model much closer to the reality as well as making it theoretically more
consistent. For example, in any stringed musical instrument both the tension
and the length exhibit significant variations, so much so that the pitch of
the sound may shift audibly from the first plucked moment to the end. Many
virtuoso players use this fact to create beautiful artistic effects. Further,
variation in tension can be so high that the string may simply break.

It is ironic that after more than 300 years, if the results here are of any
merit, we find ourselves back in the middle of that old debate, this time
questioning the model itself, rather than the solutions and the related
mathematical machinery.

\section{LARGE DISPLACEMENT MODEL FOR TRANSVERSE VIBRATIONS}%

\begin{figure}
[ptb]
\begin{center}
\includegraphics[scale=0.3]{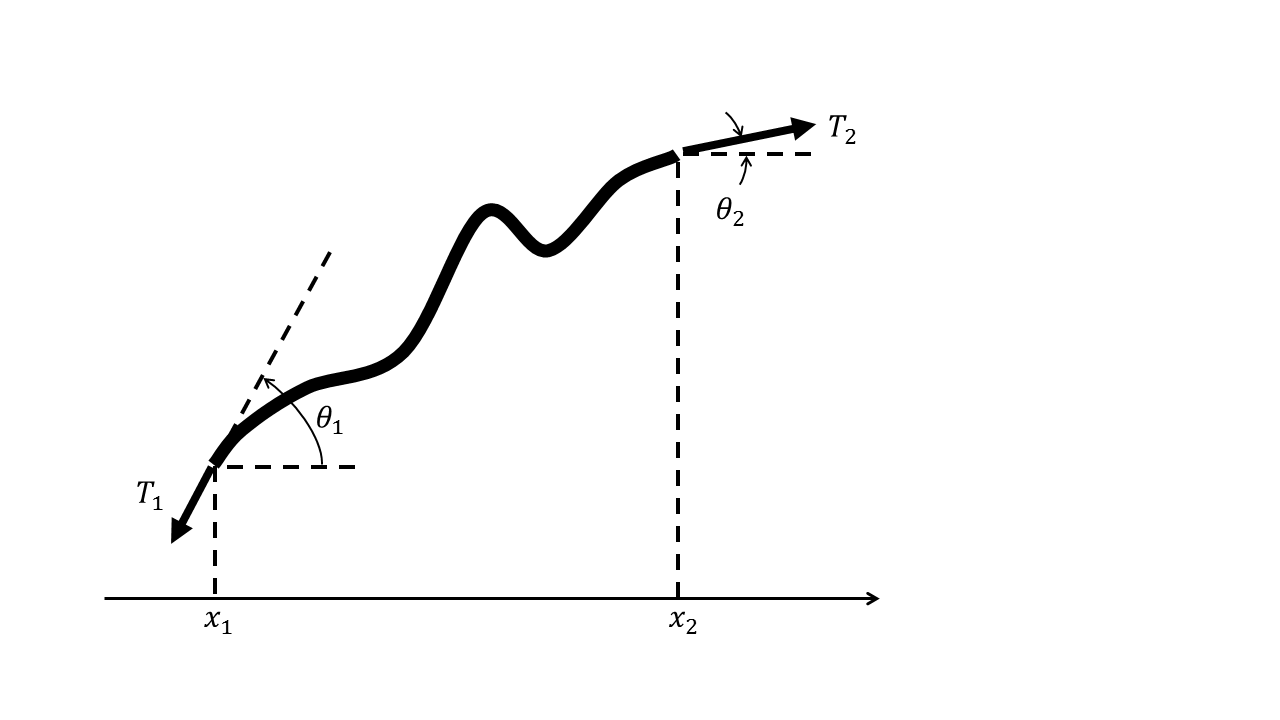}%
\caption{A finite string piece in motion.}%
\label{gen1DMot}%
\end{center}
\end{figure}

Here, we develop a vibrating string model for transverse motions without
arguing limits, infinitesimals, or first order approximations. We start with a
really finite string piece. Let $T_{1}$ and $T_{2}$ be the tensions, and,
$\theta_{1}$ and $\theta_{2}$ be the angles at left and right cuts,
respectively (Figure \ref{gen1DMot}). Then, the force balance in $x$ direction
dictates%
\begin{equation}
T_{1}\cos\theta_{1}=T_{2}\cos\theta_{2}%
\end{equation}
This relation must hold for any segment, hence for any pair of end points
$x_{i}$ and $x_{j}$, at all times. Letting $f(x,t)=T(x,t)\cos(\theta(x,t))$,
the fact that $f(x_{i},t)=f(x_{j},t)$ for all $x_{i}$ and $x_{j}$ leads to
$f(x,t)=g(t)$. Hence, $T(x,t)=g(t)\sec\theta(x,t)$.

However, tension in the string appears as a reaction to changes in local
geometry. Thus, given a shape at any $t^{\ast}$, the tension is dependent on
the shape regardless of $t^{\ast}$. As such, explicit dependence of tension on
time is not admissible. As a result, we propose%
\begin{equation}
T(x,t)=C\sec\theta(x,t)
\end{equation}
where $C$ is a constant.

Note that we have not employed any differentials or approximations in
obtaining this result. Therefore, simply in order to be compatible with
Newton's second law, regardless of such details as whether bending is included
or not, a constitutive model is utilized or not, and so on, and regardless of
how complicated or higher order the model is, any string model must conform to
this result, \textit{provided that only transverse motions are allowed}.

Similar things can be done to the transverse motions. The force balance for
the finite string piece in transverse direction at any time gives%
\begin{equation}
T_{2}\sin\theta_{2}-T_{1}\sin\theta_{1}=y_{tt}(\bar{x})m(x_{1},x_{2})
\end{equation}
where $m(x_{1},x_{2})$ is the mass of the segment from $x_{1}$ to $x_{2}$, and
$y_{tt}(\bar{x})$ is the acceleration of the center of mass ($\bar{x}$) of the
same. Since the mass in $[x_{1},x_{2}]$ is constant, we have%
\begin{equation}
m(x_{1},x_{2})=(x_{2}-x_{1})m/L_{0}=(x_{2}-x_{1})\rho
\end{equation}
where $\rho$ is the linear density at rest. Therefore,%
\begin{equation}
\frac{T_{2}\sin\theta_{2}-T_{1}\sin\theta_{1}}{x_{2}-x_{1}}=\rho y_{tt}%
(\bar{x})
\end{equation}

Letting $x_{1}=x$, $x_{2}=x+h$, $\bar{x}=x_{1}+\alpha\left(  x_{1},h,y\right)
h$, $\alpha\left(  x_{1},h,y\right)  >0$, and taking the limit as
$h\rightarrow0$, yields%
\begin{equation}
\frac{\partial}{\partial x}(T(x,t)\sin\theta(x,t))=\rho y_{tt}(x,t)
\end{equation}
which, with $T(x,t)=C\sec\theta(x,t)$, gives%
\begin{equation}
\frac{\partial\theta}{\partial x}\sec^{2}\theta=\frac{\rho}{C}y_{tt}%
\end{equation}
Now, keeping time fixed and using the fact that $\partial y/\partial
x=\tan\theta$, one has%
\begin{equation}
\frac{\partial^{2}y}{\partial x^{2}}=\frac{\partial\theta}{\partial x}%
\frac{\partial\tan\theta}{\partial\theta}=\frac{\partial\theta}{\partial
x}\sec^{2}\theta
\end{equation}
As a result%
\begin{equation}
y_{xx}=\frac{\rho}{C}y_{tt} \label{EqForPureTrans}%
\end{equation}

This result was obtained by Ciblak (2013) in a more or less similar fashion,
\cite{Ciblak2013}. Equation \ref{EqForPureTrans} is the equation of motion for
a vibrating string under the assumption of purely transverse motions, which is
exactly what d'Alembert discovered. There are significant differences
concerning other things, however. Now,

\begin{enumerate}
\item neither the tension nor the length is constant,

\item neither the displacement nor the slopes are assumed to be small,

\item no first order approximations are applied and,

\item both density and area can vary spatially.
\end{enumerate}

Whether or not this departure from the classical model is significant will be
demonstrated later.

\subsection{Nature of C}

Note that the tension can be written as $T(x,t)=Cds/dx$, where $ds$ is the
length of a segment over $dx$. Then, one may argue that if at some time
$t^{\ast}$ the string has $ds=dx$ everywhere, then the tension would be the
same everywhere, say $T_{0}$, i.e. $C=T_{0}$. The existence of a configuration
in which $ds(t^{\ast})=dx$ everywhere means $y_{x}(x,t^{\ast})=0$. Then,
$y(x,t^{\ast})=0$, due to the boundary conditions. Here, we ignored the
discussion of cases in which $ds=dx$ holds in finite intervals of distinct displacements.

Such configurations, $y(x,t^{\ast})=0$, can always be contrived to exist
(e.g., for $t^{\ast}<0$). If the spring is considered to be unloaded and at
rest prior to the application of initial conditions, then one may safely argue
that $y(x,t)=0$ for all $x$, and $t<0$, (simply consider the fact that this is
a trivial solution of the string PDE, and thus admissible). This also forces
that $y_{x}=0$ in the same domain, giving $T(x,t)=T_{0}$ for all $x$ and
$t<0$. As a result, $T_{0}$ is to be interpreted as the internal tension that
would result had the spring been unloaded and at rest. Note that this is not
equal to the tension in the initial condition.

Also note that if we now claim that the variation of $T(x,t)=T_{0}\sec\theta$
is not negligible, then the same is probably true for the total length%
\begin{align}
L\left(  t\right)   &  =\int_{0}^{L_{0}}\sqrt{1+y_{x}^{2}}dx\nonumber\\
L\left(  t\right)   &  =\int_{0}^{L_{0}}\sec\theta dx=\frac{1}{T_{0}}\int
_{0}^{L_{0}}T\left(  x,t\right)  dx
\end{align}

\subsection{What really happens at $t=0$?}

Another counter-intuitive conclusion pertains to the situation at initial
condition. For any motion to ensue the string must be given an initial
displacement or velocity, or both, which have to be achieved by suitable
initial loads. Let's consider a simple initial displacement as shown in Figure
\ref{snglKinkVertMot} with zero initial velocity.%
\begin{figure}
[ptb]
\begin{center}
\includegraphics[scale=0.28]%
{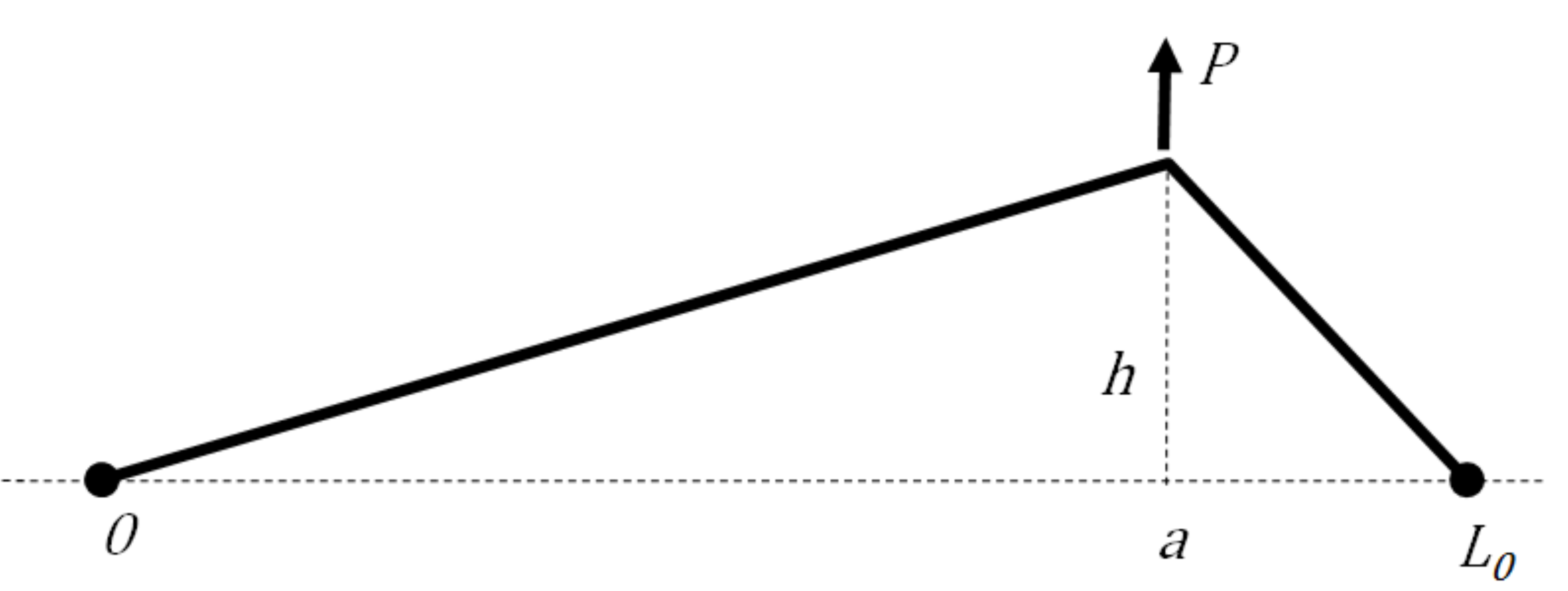}%
\caption{Single kink initial condition with vertical displacement only.}%
\label{snglKinkVertMot}%
\end{center}
\end{figure}

The force $P$ is what is needed to induce the shown displacement. Based on the
tension model presented before, the tensions at any point to the left and
right of the external force are%
\begin{align}
T_{L}  &  =T_{0}\sec\alpha=T_{0}\sqrt{1+(h/a)^{2}}\\
T_{R}  &  =T_{0}\sec\beta=T_{0}\sqrt{1+(h/\left(  L_{0}-a\right)  )^{2}}%
\end{align}

The counter-intuitive point is that these can be quite different, depending on
where the force is applied. Why is this so? The answer is simple: it is the
result of the requirement that the points move in vertical directions only.

Further, from the force balance in vertical direction at point $a$ one gets%
\begin{align}
P  &  =T_{L}\sin\alpha+T_{R}\sin\beta\\
P  &  =T_{0}(\tan\alpha+\tan\beta)=\frac{T_{0}L_{0}}{a(L_{0}-a)}h
\end{align}
This equation indicates that the string behaves like a linear spring with an
equivalent spring stiffness of%
\begin{equation}
k_{a}=\frac{T_{0}L_{0}}{a(L_{0}-a)}%
\end{equation}
Thus, in response to vertical forces, current string model responds like a
linear spring, softest at the mid-point ($k_{\min}=4T_{0}/L_{0}$) and
stiffening, without bound, as $a$ approaches to the boundaries. This behavior
is quite familiar to those who play a plucked string instrument such as guitar.

In any region where the slope vanishes the tension is simply equal to the
tension at rest. This could also be the case in the initial condition. This is
so even though the tension at neighboring points just outside such portions
can differ by a finite amount. For example, if there were two equal forces in
the previous figure applied at $a=L_{0}/3$ and $b=2L_{0}/3$ then, as the
symmetry would require, the slope within the middle section would have been
zero and the tension therein would have to be equal to $T_{0}$.

Again, this behavior is due to the assumption that the string points are
allowed to move only in transverse directions. The tension in the mid-section
would stay constant because there would be no extension in that section. In
reality, however, the tension in the mid-section would increase because some
material would leave the region at both ends due to higher tensions in the
first and third sections. Thus, in an analysis involving real material
behavior the points must be allowed to move in all directions. We shall return
to this later.

\section{UNINTENDED DECEPTION}

Up to here, an unsuspecting reader reads with intent and some level of
scrutiny, not knowing that he or she was deceived into thinking that
everything was fine, equations checked, results made sense, and so on. Yet,
the deception is there.

The author of this paper, as a victim of his own writing, also fell for the
deception, recovering only recently, after about a year since the original
manuscript, \cite{Ciblak2013}. That it is an unintended deception does not
change the fact that it is a quite powerful one. It is discovered only after
the following simple question is posed: \emph{what prevents the points of the
string from moving horizontally?}

From a kinematic point of view, one has the freedom to constrain any point or
body to move in certain ways, as long as this does not violate more basic
requirements such as continuity. However, this freedom can be bought only in
exchange for a constraint force. If a point moves on a circular path, there
must be a force that makes it do so. This quite well-known fact was somehow
overlooked in both the classical models and the model developed in the
previous section, though the latter is intentionally included here only to
make a point.

How did these models work, then, if they even did? The classical model escapes
this question by invoking the smallness arguments. Whereas, the model
developed in previous section does it by tasking the tension to do the
balancing. Whether or not this is acceptable we shall show later. For now, the
question is how to overcome this obstacle if we still insist on having a
transversely vibrating string. We should point out that the material cannot
know whether its boundaries or points are constrained or not. Thus, solving
the problem by using the material in disguise is meaningless.%
\begin{figure}
[ptb]
\begin{center}
\includegraphics[scale=0.55]%
{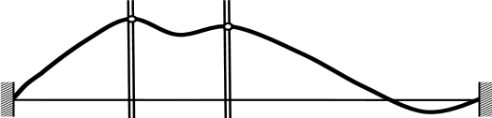}%
\caption{Constraining horizontal motions using vertical rails or slots.}%
\label{twoRails}%
\end{center}
\end{figure}

Figure \ref{twoRails} shows a string, two points of which are constrained to
move in two vertical slots. We can imagine this to be done for all points of
the string. The final result will be a horizontal constraint force
distribution that will be responsible for inducing transverse motions.
However, this would destroy our basic assumption: \emph{the free vibration},
or we would have to redefine it.

Such a horizontal force distribution $q(x,t)$ would be given by%
\begin{equation}
q(x,t)=-\frac{\partial}{\partial x}(T\cos\theta)
\end{equation}

The classical model assumes $\cos\theta\approx1$ leading to $T=T_{0}$, then
yielding $q\left(  x,t\right)  =0$; whereas the model of the previous section
takes $T=T_{0}\sec\theta$, giving $q\left(  x,t\right)  =-\frac{\partial
}{\partial x}(T_{0}\sec\theta\cos\theta)=0$, too. Since the constancy of $T$
is not defendable, we are left with the latter choice, which was developed on
the assumption of $q\left(  x,t\right)  =0$, anyway (the unintended deception).

Thus, our desire to have zero horizontal force distribution determines the
tension model for the material, which is only slightly less than magic.
Nevertheless, this latter model is really closer to reality, from whence it
derives its power, due to the fact that a simple material model reduces to it
in a certain limit (see the next section).

\subsection{Material model}

Instead of the ad hoc material models of previous theories, we now introduce a
well-known material model based on Hooke's law and the linear theory of
elasticity. Obviously, one can choose other material models. However, we
choose a Hookean model for simplicity and demonstration. Similarly, it is also
perfectly allowable to adopt some non-linear theory of elasticity. Considering
the cases of finite strain and effects of strain rate, which can actually be
quite high, this would have been more prudent. Yet, it would take the focus
away from what this study is trying to achieve.

We adopt the definitions of the engineering stress and strain, and the modulus
of elasticity, $E$. Let $A_{f}$ be the cross-sectional area of the string when
it is non-taut and free, and, $dx_{f}$ be the length of a piece whose length
in taut but unloaded case is $dx$. The engineering strain and stress, defined
as $\varepsilon=ds/dx_{f}-1$ and $\sigma=T/A_{f}$, respectively, are related
as $\sigma=E\varepsilon$. Therefore, one can write%
\begin{align}
T/A_{f}  &  =E(ds/dx_{f}-1)\\
T_{0}/A_{f}  &  =E(dx/dx_{f}-1)
\end{align}
which, after eliminations, gives%
\begin{equation}
T=T_{0}\sec\theta+T_{f}(\sec\theta-1) \label{TensionModel}%
\end{equation}
where $T_{f}=EA_{f}$. Note that $T_{f}$ can be considered as a material constant.

The first term of Equation \ref{TensionModel} is the culprit responsible for
the deception, with the aid of the second term, of course. For we can now
easily see that for very small slopes, the equation reduces to $T=T_{0}$, to
that of the classical theory; whereas for small, but not that small slopes, or
for $T_{f}=0$, it reduces to that of the model developed here previously,
because the second term, $\sec\theta-1$, becomes negligible when compared to
the first, $\sec\theta$. The following table summarizes these results.

\begin{center}%
\begin{tabular}
[c]{|l|l|}\hline
Slope & Tension Model\\\hline
\multicolumn{1}{|r|}{General:} & $\left(  T_{0}+T_{f}\right)  \sec\theta
-T_{f}$\\
\multicolumn{1}{|r|}{Small (or $T_{f}=0$):} & $T_{0}\sec\theta$\\
\multicolumn{1}{|r|}{Very Small:} & $T_{0}$\\\hline
\end{tabular}
\smallskip
\end{center}

Note that the general case is now applicable to strings with zero thickness,
too. As to why the linear elastic material model reduces to the classical
model may be attributed to the use of lumped masses connected via springs in
modeling the string in the latter. However, the reason that it also reduces to
the model developed in the previous section remains an open question since we
did not invoke any elasticity condition there. This could be taken as a
testimony for the ubiquity and success of the Hooke's law.

The horizontal force distribution needed to induce purely transverse motions
can now be given as%
\begin{equation}
q(x)=-\frac{T_{f}y_{x}y_{xx}}{\left(  1+y_{x}^{2}\right)  ^{3/2}}=-T_{f}%
\kappa(x)y_{x}\label{HorForceTransVib}%
\end{equation}
where $\kappa(x)$ is the curvature. Note that in regions where the string is
straight the horizontal load vanishes. This property will be shown to result
in important implications.

\subsection{True equation of motion for transverse vibrations}

Using the new tension relation developed above, the vertical force balance
yields the following.%
\begin{align}
\left(  1-k\frac{1}{\left(  1+y_{x}^{2}\right)  ^{3/2}}\right)  y_{xx}  &
=\frac{1}{c_{f0}^{2}}y_{tt}\label{TrueEqTransVib}\\
y_{xx}-k\kappa &  =\frac{1}{c_{f0}^{2}}y_{tt} \label{TrueEqTransVib2}%
\end{align}
where $k=\frac{T_{f}}{T_{0}+T_{f}}$, $c_{f0}=\sqrt{\frac{T_{0}+T_{f}}{\rho}}$,
and $\kappa$ is the curvature. Equation \ref{TrueEqTransVib2} is the correct
equation of motion for the transverse vibrations of a string under a
horizontal force constraint.

Again, we investigate what happens in certain limits. The results are
summarized in the following table.

\begin{center}%
\begin{tabular}
[c]{|l|l|}\hline
Slope & Transverse Vibration Model\\\hline
\multicolumn{1}{|r|}{General:} & $\left(  1-k\frac{1}{\left(  1+y_{x}%
^{2}\right)  ^{3/2}}\right)  y_{xx}=\frac{1}{c_{f0}^{2}}y_{tt}$\\
\multicolumn{1}{|r|}{Small:} & $(1+\frac{3}{2}\frac{T_{f}}{T_{0}}y_{x}%
^{2})y_{xx}=\frac{1}{c^{2}}y_{tt}$\\
\multicolumn{1}{|r|}{%
\begin{tabular}
[c]{r}%
Very Small\\
or $k=0$:
\end{tabular}
} & $y_{xx}=\frac{1}{c^{2}}y_{tt}$\\\hline
\end{tabular}

\end{center}

where $c=\sqrt{\frac{T_{0}}{\rho}}$. Again, the classical model creeps in as a
limit case. Note that for $k\rightarrow0$, either $T_{f}\rightarrow0$ or
$T_{0}\gg T_{f}$.

\subsection{Non-taut strings can vibrate, too}

In classical treatments, it is as if the case of an initially non-taut string
does not exist. Yet, it is a quite familiar reality. What is the equation of
motion governing such a string, then? The classical model is completely silent
on this question. The model derived above, though still unrealistic, allows
one treat initially non-taut, but non-slack, strings. All one has to do is to
set $T_{0}=0$, which corresponds to $k=1$. The result is $T=T_{f}(\sec
\theta-1)$ and%
\begin{equation}
(1-\frac{1}{\left(  1+y_{x}^{2}\right)  ^{3/2}})y_{xx}=\frac{\rho}{T_{f}%
}y_{tt}=\frac{1}{c_{f}^{2}}y_{tt}%
\end{equation}%
\begin{equation}
y_{xx}-\kappa(x)=\frac{1}{c_{f}^{2}}y_{tt}%
\end{equation}
where $\kappa(x)$ is the curvature and $c_{f}=\sqrt{\frac{T_{f}}{\rho}}$.

For small slopes, this equation reduces to $y_{tt}=0$, solution of which is
$y=v_{0}(x)t+y_{0}(x)$, including the initial conditions. For any fixed
boundary conditions, this would yield a non-moving string. For other types, it
would yield non-oscillating solutions moving to infinity. Hence, the classical
limit of non-taut strings is not meaningful.

However, for the small, but not that small, limit one has:%
\begin{align}
\frac{3}{2}y_{x}^{2}y_{xx}  &  =\frac{1}{c_{f}^{2}}y_{tt}\\
\frac{1}{2}\left(  y_{x}^{3}\right)  _{x}  &  =\frac{1}{c_{f}^{2}}y_{tt}%
\end{align}
Letting $u=y_{x}$, taking the derivatives of both sides with respect to $x$,
and using the substitution $\tau=c_{f}t/\sqrt{2}$ gives,%
\begin{equation}
(u^{3})_{xx}=u_{\tau\tau}%
\end{equation}
This is an interesting equation in itself, similar to Burger's equation.

\subsection{Another big oversight}

The problems of the classical string models do not end there. Assuming that
the proper horizontal force distribution is somewhat maintained during motion
so that transverse vibrations are guaranteed, there is still the problem of
initial conditions. This discussion can probably be taken as a tribute to what
Euler was trying to do.

We shall only concentrate on the initial displacement with zero initial
velocity cases. Consider the two distinct initial configurations in Figure
\ref{snglKink2Cases}, which seemingly have identical shapes.%
\begin{figure}
[ptb]
\begin{center}
\includegraphics[scale=0.38]%
{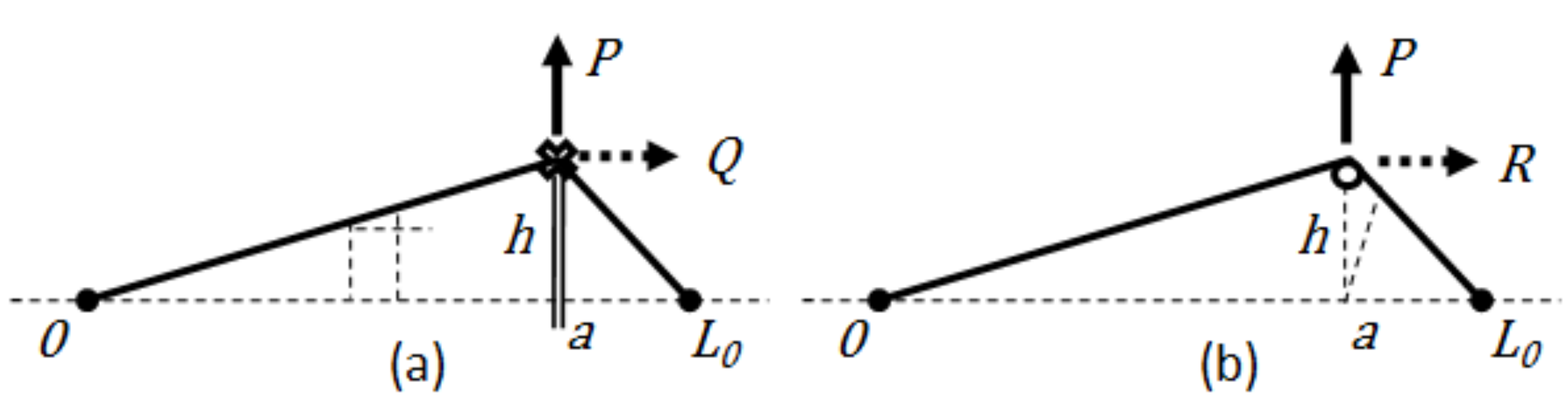}%
\caption{Strings with identical initial shapes, but with distinct displacement
fields.}%
\label{snglKink2Cases}%
\end{center}
\end{figure}

In Figure \ref{snglKink2Cases}(a), the point of force application is
constrained to move along a vertical line, indicated by a linear slot. In this
case, the tensions to the left and right of the force are different, in
general. Every point to the left of the force experience the same strain thus
move vertically without the need for a horizontal force distribution
constraint. Same is true for the right section. This can also be found from
Equation \ref{HorForceTransVib}, in which the vanishing second derivatives
would result in zero force distributions in each region. The force $Q$
represents the action of the vertical slot. Now, the tensions in left and
right regions, $T_{L}$ and $T_{R}$, respectively, are given by%
\begin{equation}
T_{L}=(T_{0}+T_{f})\sqrt{1+\left(  \frac{h}{a}\right)  ^{2}}-T_{f}%
\end{equation}

\begin{equation}
T_{R}=(T_{0}+T_{f})\sqrt{1+\left(  \frac{h}{L_{0}-a}\right)  ^{2}}-T_{f}%
\end{equation}

The horizontal constraint load is given by%
\begin{equation}
Q=T_{f}\left(  \frac{1}{\sqrt{1+\left(  \frac{h}{a}\right)  ^{2}}}-\frac
{1}{\sqrt{1+\left(  \frac{h}{L_{0}-a}\right)  ^{2}}}\right)
\end{equation}
which is curiously independent of the initial tension, and vanishes if
$T_{f}=0$. This may be one of the reasons of the apparent success of the
classical models.

The vertical force is related to the displacement at $a$ as follows.%
\begin{align}
P  &  =(T_{0}+T_{f})\frac{L_{0}h}{(L_{0}-a)a}\nonumber\\
&  -T_{f}\left(  \frac{h}{\sqrt{(L-a)^{2}+h^{2}}}+\frac{h}{\sqrt{a^{2}+h^{2}}%
}\right)
\end{align}
Now, the linear dependence of $P$ on $h$ is lost, although it is recovered for
sufficiently small $h$ or $T_{f}=0$.

The case in Figure \ref{snglKink2Cases}(b) is more problematic. The vertical
force is applied to a negligibly small and frictionless pulley that allows the
tension on both sides to balance. The horizontal force $R$ is responsible for
keeping the pulley in a vertical path. In the case shown, i.e. when the point
of force application is to the right of the midpoint of the string, the
tension would tend to be higher in the right region causing some material to
flow from the left to the right, which results in relaxation towards a
balance. For example, the material point that was initially at $x=a$ would
follow an oblique line into the right region. As a result, despite such a
perfectly allowable initial configuration, none of the previous models would
be applicable because we now have non-transverse motion or, more correctly, a
general motion in $xy$-plane. Note that in this case, when the tension is
allowed to balance, the relation $\frac{ds}{d\theta}=\sec\theta$ does not hold
any longer.

There are more complicated, yet quite proper, initial configurations that
cannot be handled by our simplified models. Figure \ref{ICwithHorMot} shows
only two of such situations which are quite easily doable, using a simple
rubber string, for example. Again, Euler's assertion, that any initial shape
that can be drawn by hand without lifting the pen should be admissible, is
true except that he, for some reason, did not consider such simple initial
configurations as in Figure \ref{ICwithHorMot}, which are impossible to handle
using d'Alembert's model.

Obviously, for large displacements one would most likely encounter collisions
or self-intersecting shapes. These aside, however, we do not even have a model
for small displacements allowing 2D motions or one that can handle simple
initial conditions as in Figure \ref{ICwithHorMot}. All of these point to a
need for a more general vibration model for the one dimensional string. One
such model is presented in the sequel.%
\begin{figure}
[ptb]
\begin{center}
\includegraphics[scale=0.38]%
{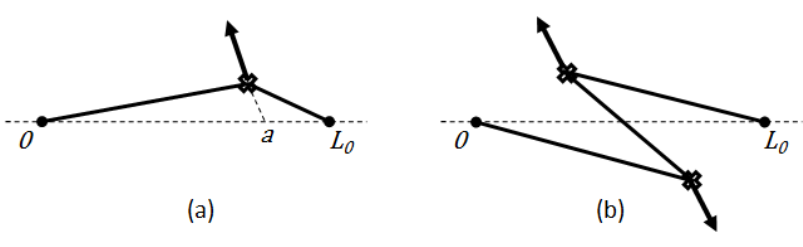}%
\caption{Initial conditions with forced horizontal displacements.}%
\label{ICwithHorMot}%
\end{center}
\end{figure}

\section{2D VIBRATIONS OF 1D STRING}%

\begin{figure}
[ptb]
\begin{center}
\includegraphics[scale=0.4]%
{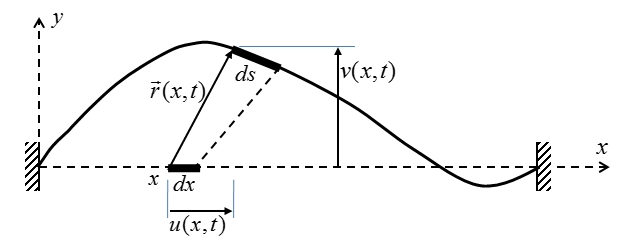}%
\caption{String in 2D motion.}%
\label{gen2DMot}%
\end{center}
\end{figure}

Figure \ref{gen2DMot} shows a string in general plane motion in which an
infinitesimal element originally at $x$, with a length of $dx$, is moved to
$(x+u(x,t),v(x,t))$ and stretched to $ds$, where $u(x,t)$ and $v(x,t)$ are the
displacement components along the $x$ and $y$ coordinate axes, respectively.
The displacement components form scalar fields over the span of $x$ and, for
now, we consider them to be at least twice differentiable functions of $x$ and
$t$.

The stretched length of the infinitesimal element is given by%
\begin{equation}
ds=\sqrt{(1+u_{x})^{2}+v_{x}^{2}}dx
\end{equation}
Now, however, $\sec\theta\neq ds/dx$. Nevertheless, the general model for the
tension still applies as: $T=(T_{0}+T_{f})ds/dx-T_{f},$ therefore,%
\begin{equation}
T=(T_{0}+T_{f})\sqrt{(1+u_{x})^{2}+v_{x}^{2}}-T_{f}%
\end{equation}
The force balance equations remain simple:%
\begin{align}
u_{tt}dm  &  =d(T\cos\theta)\\
v_{tt}dm  &  =d(T\sin\theta)
\end{align}
Now, we use the following.
\begin{subequations}
\label{cossin}%
\begin{align}
\cos\theta &  =\frac{1+u_{x}}{\sqrt{(1+u_{x})^{2}+v_{x}^{2}}}\\
\sin\theta &  =\frac{v_{x}}{\sqrt{(1+u_{x})^{2}+v_{x}^{2}}}%
\end{align}
and $dm=\rho_{0}dx$, where $\rho_{0}$ is the linear density at rest, to get
the final result presented below.%
\end{subequations}
\begin{align}
\frac{1}{c_{f0}^{2}}u_{tt}  &  =u_{xx}-k\frac{\partial}{\partial x}%
\frac{1+u_{x}}{\sqrt{(1+u_{x})^{2}+v_{x}^{2}}}\label{Main2DEqU}\\
\frac{1}{c_{f0}^{2}}v_{tt}  &  =v_{xx}-k\frac{\partial}{\partial x}\frac
{v_{x}}{\sqrt{(1+u_{x})^{2}+v_{x}^{2}}} \label{Main2DEqV}%
\end{align}
where $k=\frac{T_{f}}{T_{0}+T_{f}}$ and $c_{f0}=\sqrt{\frac{T_{0}+T_{f}}%
{\rho_{0}}}$.

Equations \ref{Main2DEqU} and \ref{Main2DEqV} are the equations of motion for
the one-dimensional string in $xy$-plane and are the main results of this study.

As a justification, we present below some limit cases, in which $k=0$ cases
correspond to $T_{f}=0$ or $T_{0}>>T_{f}$, whereas $k=1$ cases correspond to
$T_{0}=0$ or $T_{0}<<T_{f}$. The definitions $c^{2}=T_{0}/\rho_{0}$,
$c_{f}^{2}=T_{f}/\rho_{0}$, $c_{f0}^{2}=\left(  T_{f}+T_{0}\right)  /\rho_{0}$
are used throughout. Also note that, for a string with a vanishing thickness
one has $T_{f}=0$, leading to $k=0$.

\subsection{CASE: Transverse motion only}

In this mode $u=0$ and Equation \ref{Main2DEqU} can be satisfied only if $k=0$
(since $v_{x}\neq0$, for motion). Therefore, for transverse motions one must
have $T_{f}=0$. This is true only for a vanishingly thin string $(A_{f}=0)$ or
a material with no resistance to stretching $(E=0)$. The case of very taut
string $(T_{0}>>T_{f})$ also approximates this mode. With $T_{f}=0$ the
tension, the horizontal constraint force, and the remaining equation of motion
become%
\begin{align}
T  &  =T_{0}\sqrt{1+v_{x}^{2}}=T_{0}\sec\theta\label{tensVertMot1}\\
q(x,t)  &  =-\frac{\partial}{\partial x}(T\frac{1}{\sqrt{1+v_{x}^{2}}%
})=0\label{tensVertMot2}\\
\frac{1}{c_{f0}^{2}}v_{tt}  &  =\frac{1}{c^{2}}v_{tt}=v_{xx}%
\end{align}

This is where lies the reason of the apparent success of the classical string
theory, as it is modified by Ciblak, \cite{Ciblak2013}. This time, however,
the tension model based on a simple linear elastic behavior seems to be
exactly what is needed to rescue the classical string model. Nevertheless, in
following sections we show that pure transverse motions are possible only in
rare cases.

Note that pure transverse motions are possible using kinematic constrains such
as a discrete vertical rail systems. However, these introduce hidden boundary
conditions which are not declared anywhere, as well as making the problem look
like an unnatural contraption without any justification. In such cases, the
horizontal constraint force vanishes in between the rails. Therefore,
kinematic constraints such as discrete rails require discrete, horizontal
point forces.

In the rest of this study we only explore the admissibility of pure transverse
motions without any kinematic constraints. In other words, we look for cases
in which $q\left(  x,0\right)  $ is whatever needed to induce pure transverse
displacements at $t=0$, yet $q\left(  x,t\right)  =0$ ($t>0$) despite having
truly pure transverse motions without any kinematic constraints.

\subsection{CASE: Horizontal motion only}

The conditions necessary to induce this mode are: $v=0$ and $\theta=0$,
leading to $ds/dx=1+u_{x},$ and $w(x)=0$ (vertical constraint force). The
equations of motion become

\begin{center}%
\begin{tabular}
[c]{|l|l|}\hline
Limit & Equations\\\hline
General: & $u_{tt}=c_{f0}^{2}u_{xx}$\\\hline%
\begin{tabular}
[c]{l}%
$k=0$:\\
(A material model)
\end{tabular}
& $u_{tt}=c^{2}u_{xx}$\\\hline%
\begin{tabular}
[c]{l}%
$k=1$:\\
(Non-taut string)
\end{tabular}
& $u_{tt}=c_{f}^{2}u_{xx}$\\\hline
\end{tabular}

\end{center}

with the following tension models.

\begin{center}%
\begin{tabular}
[c]{|l|l|}\hline
Limit & Tension\\\hline
General: & $T=T_{0}\left(  1+u_{x}\right)  +T_{f}u_{x}$\\\hline%
\begin{tabular}
[c]{l}%
$k=0$:\\
(A material model)
\end{tabular}
& $T=T_{0}\left(  1+u_{x}\right)  $\\\hline%
\begin{tabular}
[c]{l}%
$k=1$:\\
(Non-taut string)
\end{tabular}
& $T=T_{f}u_{x}$\\\hline
\end{tabular}

\end{center}

Interestingly, the pure horizontal motion case does not suffer from any
troubles similar to what engulfed the pure transverse motions. This is because
the pure horizontal motion, including the initial conditions, does not affect
the vertical displacement.

\subsection{CASE: $k=0$ (a material model)}

In this case the tension becomes $T=T_{0}\sqrt{(1+u_{x})^{2}+v_{x}^{2}}$, and
the equations reduce to%
\begin{equation}%
\begin{tabular}
[c]{l}%
$u_{tt}=c^{2}u_{xx}$\\
$v_{tt}=c^{2}v_{xx}$%
\end{tabular}
\end{equation}

\subsection{CASE: $k=1$ (initially non-taut/non-slack string)}

Now, $T=T_{f}\left(  \sqrt{(1+u_{x})^{2}+v_{x}^{2}}-1\right)  $ and the
equations reduce to%
\begin{equation}%
\begin{tabular}
[c]{l}%
$\frac{1}{c_{f}^{2}}u_{tt}=u_{xx}-\frac{\partial}{\partial x}\frac{1+u_{x}%
}{\sqrt{(1+u_{x})^{2}+v_{x}^{2}}}$\\
$\frac{1}{c_{f}^{2}}v_{tt}=v_{xx}-\frac{\partial}{\partial x}\frac{v_{x}%
}{\sqrt{(1+u_{x})^{2}+v_{x}^{2}}}$%
\end{tabular}
\end{equation}

It can be seen that the classical model and the non-taut model are the two
distinct limits of the general 2D motions of the 1D string.

\subsection{Vector form and extension to higher dimensions}

Using a substitution $\mathbf{r}=[1+u_{x},v_{x}]^{T}\in R^{2}-\{\mathbf{0}\}$,
a vector form of 2D vibrations,%
\begin{equation}
\frac{1}{c_{f0}^{2}}\mathbf{r}_{tt}=\mathbf{r}_{xx}-k\frac{\partial^{2}%
}{\partial x^{2}}\left(  \frac{\mathbf{r}}{\left\Vert \mathbf{r}\right\Vert
}\right)  \label{vecform}%
\end{equation}
is obtained, which can be used as a basis for extending the result to $n$
dimensional motions of a 1D string. That is, by letting $\mathbf{r}%
=[1+u_{x},v_{1x},v_{2x},...,v_{(n-1)x}]^{T}\in R^{n}-\{\mathbf{0}\}$, Equation
\ref{vecform} can be interpreted as describing the motion of a 1D string in
$n$ dimensional space, where $v_{ix}$ are the displacement gradients in
directions perpendicular to $x$.

\section{INITIAL CONDITIONS FOR 2D MOTIONS}

We now return to the question of initial conditions, which is not a
straightforward problem as was seen in previous sections. In general, the
initial spatial configuration must satisfy the following.
\begin{subequations}
\label{ICforces}%
\begin{align}
q(x)  &  =-(T_{0}+T_{f})\frac{\partial}{\partial x}\left(  U_{0x}%
-k\frac{U_{0x}}{\sqrt{U_{0x}^{2}+v_{0x}^{2}}}\right) \\
w(x)  &  =-(T_{0}+T_{f})\frac{\partial}{\partial x}\left(  v_{0x}%
-k\frac{v_{0x}}{\sqrt{U_{0x}^{2}+v_{0x}^{2}}}\right)
\end{align}
where $U_{0x}=1+u_{0x}$ and, $u_{0x}$ and $v_{0x}$ are displacement gradients
at $t=0$, and, $q$ and $w$ are the horizontal and vertical force
distributions, respectively. If the force distributions are given, equations
\ref{ICforces} can be used to determine the initial displacement gradients.
For example, let $Q(x)=\int q(x)dx$ and $W(x)=\int w(x)dx$. Then, the
solutions are%
\end{subequations}
\begin{align}
u_{0}  &  =\int\left(  1+\frac{k}{\sqrt{A_{1}^{2}+A_{2}^{2}}}\right)
A_{1}dx+c_{3}-x\\
v_{0}  &  =\int\left(  1+\frac{k}{\sqrt{A_{1}^{2}+A_{2}^{2}}}\right)
A_{2}dx+c_{4}\\
\left(  \frac{ds}{dx}\right)  _{0}  &  =k+\sqrt{A_{1}^{2}+A_{2}^{2}}\\
T  &  =(T_{0}+T_{f})\sqrt{A_{1}^{2}+A_{2}^{2}}%
\end{align}
where $A_{1}=c_{1}-\frac{Q(x)}{T_{0}+T_{f}}$ and $A_{2}=c_{2}-\frac
{W(x)}{T_{0}+T_{f}}$, and $c_{i}$ are arbitrary constants, which should be
determined from the boundary conditions on $u_{0}$ and $v_{0}$. A useful
special case is summarized in the following proposition.

\begin{proposition}
\label{prop1}A segment of the string in the initial condition is free of
external forces if and only if the displacement gradients (or, equivalently,
the extension ratio or the tension) are constant therein.
\end{proposition}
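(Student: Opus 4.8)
The plan is to read ``free of external forces'' on a segment as the simultaneous vanishing $q(x)\equiv0$ and $w(x)\equiv0$ there, and then to exploit the explicit forms recorded in Eq.~(\ref{ICforces}). Abbreviating
\begin{align}
A_{1} &= U_{0x}-k\frac{U_{0x}}{\sqrt{U_{0x}^{2}+v_{0x}^{2}}},\\
A_{2} &= v_{0x}-k\frac{v_{0x}}{\sqrt{U_{0x}^{2}+v_{0x}^{2}}},
\end{align}
so that $q=-(T_{0}+T_{f})\partial_{x}A_{1}$ and $w=-(T_{0}+T_{f})\partial_{x}A_{2}$, and using $T_{0}+T_{f}>0$, the force-free condition is at once equivalent to $A_{1}$ and $A_{2}$ both being constant on the segment. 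This reduces the whole statement to a claim about this pair. The easy ($\Leftarrow$) direction is then immediate: if $U_{0x}$ and $v_{0x}$ are constant, then $A_{1}$ and $A_{2}$ are fixed algebraic functions of those constants, hence themselves constant, so $q=w=0$.

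For the hard ($\Rightarrow$) direction I would use the structural observation that, with $r=\sqrt{U_{0x}^{2}+v_{0x}^{2}}$, the map factors as the scalar rescaling $(A_{1},A_{2})=(1-k/r)(U_{0x},v_{0x})$. Squaring and summing gives $A_{1}^{2}+A_{2}^{2}=(r-k)^{2}$. Invoking the physical positivity of the tension---the relation $(ds/dx)_{0}=k+\sqrt{A_{1}^{2}+A_{2}^{2}}$ together with $T=(T_{0}+T_{f})\sqrt{A_{1}^{2}+A_{2}^{2}}\geq0$, both recorded just above the proposition, forces $r\geq k$---lets me take the root unambiguously as $\sqrt{A_{1}^{2}+A_{2}^{2}}=r-k$. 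Since $A_{1},A_{2}$ are constant, so is $r=k+\sqrt{A_{1}^{2}+A_{2}^{2}}$; this is exactly the statement that the extension ratio $ds/dx$ is constant, and since $T$ is an affine function of $r$, the tension is constant as well. Finally, with $r$ now constant the factor $1-k/r$ is a nonzero constant, so $U_{0x}=A_{1}/(1-k/r)$ and $v_{0x}=A_{2}/(1-k/r)$ are each constant, closing the equivalence with the displacement gradients.

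The step I expect to be the real obstacle is the degenerate case $r=k$, equivalently $T=0$: there the common factor $1-k/r$ vanishes, forcing $A_{1}=A_{2}=0$ identically, and the inversion $(A_{1},A_{2})\mapsto(U_{0x},v_{0x})$ collapses---the gradient vector may rotate freely around the circle of radius $k$ while keeping $A_{1},A_{2}$ (hence $q$ and $w$) constant and zero. In that slack, zero-tension limit ``extension ratio/tension constant'' no longer implies ``gradients constant,'' so the three characterizations in the statement coincide genuinely only under the standing physical assumption $T>0$ (i.e. $r>k$). I would therefore make that positivity hypothesis explicit, note that it simultaneously supplies the sign needed to resolve the square root above, and remark that for a taut string ($r=1>k$ in the unloaded reference state) it holds automatically, so the degenerate branch is physically excluded.
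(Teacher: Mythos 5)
Your proposal is correct and, in outline, it is the same argument the paper gives: the paper's proof likewise integrates equations (\ref{ICforces}) to conclude that on a force-free segment the two bracketed quantities (your $A_{1}$, $A_{2}$) are constant, treats the converse direction as immediate, and then disposes of the forward direction with the one-line assertion that the resulting ``two algebraic equations of the gradients'' have a solution that ``can be shown to uniquely exist.'' What you supply is precisely the content of that unproved assertion: the factorization $(A_{1},A_{2})=(1-k/r)\,(U_{0x},v_{0x})$ with $r=\sqrt{U_{0x}^{2}+v_{0x}^{2}}$, the identity $A_{1}^{2}+A_{2}^{2}=(r-k)^{2}$, the use of $T\geq 0$ to resolve the square root as $r-k$, and the explicit inversion once $1-k/r\neq 0$. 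More importantly, your analysis shows that the paper's uniqueness claim is actually false on the degenerate branch $r=k$ (equivalently $T=0$): there $A_{1}=A_{2}=0$ no matter how the gradient vector rotates around the circle of radius $k$, so such a segment is force-free while its gradients vary --- physically, a perfectly flexible string at its natural length can be bent into any shape with no applied force. Hence the proposition, and the paper's proof of it, hold only under the standing hypothesis $T>0$ on the segment, which you rightly make explicit; this is a genuine sharpening of the paper's argument rather than a different method, and it also clarifies that the parenthetical equivalence with ``constant tension'' is one-directional in the degenerate case (tension and extension ratio are constant there even though the gradients are not).
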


\begin{proof}
If a segment if free of external forces then equations \ref{ICforces} yield,
after an integration, two algebraic equations of the gradients, whose solution
can be shown to uniquely exist. If the gradients are constant in a segment
then the same equations give zero external force distributions. Note that in
such a segment both the tension and the extension ratio remain constant by
virtue of their definitions.
\end{proof}

If, instead of the external forces, the initial shape is given, then the
situation is more involved. Now, equations \ref{ICforces} can only be used to
determine the required force distributions. Thus, we have to find other ways
of determining the initial displacement fields.

However, as discussed earlier, the initial shape alone is not enough to
determine the initial situation. There could be infinitely many initial
configurations that yield the same shape. Further, in contradiction to what
Euler stated, there are initial shapes that can be drawn by hand without
lifting the pen, but cannot be represented by a simple function. An example of
this was given in Figure \ref{ICwithHorMot}(b). This issue will be addressed later.

Assuming now that the initial shape can be given as $y_{0}(x)$, we distinguish
two major categories: a) those in which the string is allowed to initially
equalize the internal tension, b) and those in which it is not. The latter
requires the description of each particular case as well as introducing extra
boundary conditions. Therefore, we concentrate on the former\ by stating the
following theorem.

\begin{theorem}
\label{theo1}Let the initial shape of the string be given by a continuous and
piecewise differentiable function $y_{0}(x)$ that satisfies the boundary
conditions. If the tension is the same everywhere initially, then the initial
displacement fields are given by%
\begin{align}
u_{0}(x)  &  =L^{-1}\left(  \frac{L(L_{0})}{L_{0}}x\right)  -x\\
v_{0}(x)  &  =y_{0}\left(  x+u_{0}\right)
\end{align}
where $L\left(  z\right)  =\int_{0}^{z}\sqrt{1+\left(  y_{0}^{\prime}\left(
\zeta\right)  \right)  ^{2}}d\zeta$ is the length function defined over
$[0,L_{0}]$. Further, $u_{0}(x)$ and $v_{0}(x)$ automatically satisfy the
boundary conditions.
\end{theorem}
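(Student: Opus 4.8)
The plan is to read the two displayed formulas as an arc-length reparametrization of the prescribed shape and to extract the scaling from the constant-tension hypothesis. The second formula is almost immediate: the material point whose reference coordinate is $x$ ends up at horizontal position $X=x+u_0(x)$ and, since by assumption it lies on the given shape $y=y_0(X)$, we have $v_0(x)=y_0\big(x+u_0(x)\big)$ by definition. For the first formula I would start from the general 2D tension relation $T=(T_0+T_f)\,ds/dx-T_f$. The hypothesis that the tension is the same everywhere at $t=0$ then forces the extension ratio $ds/dx=\sqrt{(1+u_{0x})^2+v_{0x}^2}$ to be constant along the string. Since the total deformed length $\int_0^{L_0}(ds/dx)\,dx$ must equal the arc length of the prescribed shape, namely $\int_0^{L_0}\sqrt{1+(y_0')^2}\,d\zeta=L(L_0)$, that constant extension ratio is pinned down to $L(L_0)/L_0$.

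Next I would set up the arc-length correspondence. The deformed arc length from the left support to the image of the point at reference coordinate $x$ is, on one hand, $\int_0^x (ds/dx')\,dx'=\tfrac{L(L_0)}{L_0}\,x$ because $ds/dx$ is constant; on the other hand, arc length measured along the shape curve as a function of the horizontal coordinate is exactly $L$, so the same length equals $L(X)=L\big(x+u_0(x)\big)$. Equating the two expressions yields the single relation $L\big(x+u_0(x)\big)=\tfrac{L(L_0)}{L_0}\,x$, which after inversion of $L$ gives $x+u_0(x)=L^{-1}\!\big(\tfrac{L(L_0)}{L_0}x\big)$, i.e. the claimed formula for $u_0$, and substituting this into the shape relation reproduces $v_0(x)=y_0(x+u_0)$.

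The one delicate point — and the main technical obstacle — is justifying the inverse $L^{-1}$ under the mere hypothesis that $y_0$ is continuous and only piecewise differentiable. Here I would use that $L'(\zeta)=\sqrt{1+(y_0')^2}\ge 1>0$ wherever $y_0'$ is defined, so $L$ is a continuous, strictly increasing bijection of $[0,L_0]$ onto $[0,L(L_0)]$ with a well-defined continuous inverse; the positivity of the integrand is what rescues invertibility despite the possible kinks in the shape. Finally I would check the boundary conditions directly from the formulas: $L^{-1}(0)=0$ gives $u_0(0)=0$, and $L^{-1}(L(L_0))=L_0$ gives $u_0(L_0)=0$, after which $v_0(0)=y_0(0)$ and $v_0(L_0)=y_0(L_0)$ both vanish because $y_0$ was assumed to satisfy the fixed-end boundary conditions, so the supports stay put in both coordinates and the proof is complete.
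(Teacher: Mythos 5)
Your proposal is correct and follows essentially the same route as the paper's own proof: constant tension forces a constant extension ratio equal to $L(L_{0})/L_{0}$, the arc-length correspondence yields $L\left(x+u_{0}\right)=\frac{L(L_{0})}{L_{0}}x$ (the paper writes this as a differential relation and integrates, which is the same computation), inversion of the strictly increasing length function gives $u_{0}$, the geometric placement of material points gives $v_{0}$, and the boundary checks are identical. If anything, you are slightly more careful than the paper on two points it merely asserts: why the constant ratio must equal $L(L_{0})/L_{0}$ (total-length consistency) and why $L$ is invertible (integrand bounded below by $1$).
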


\begin{proof}
Based on the properties of $y_{0}(x)$ and the form of the integrand, the
length function%
\begin{equation}
L(z)=%
{\displaystyle\int\limits_{0}^{z}}
\sqrt{1+\left(  y_{0}^{\prime}\left(  \zeta\right)  \right)  ^{2}}d\zeta\text{
\ \ \ \ \ }z\in\lbrack0,L_{0}]
\end{equation}
is continuous and strictly increasing. Therefore, the inverse function
$L^{-1}(z)$ exists. Since the tension is constant throughout, then so is%
\begin{equation}
\left(  \frac{ds}{dx}\right)  _{0}=\frac{L(L_{0})}{L_{0}}%
\end{equation}
where $L(L_{0})$ is the initial length. From the first of equations
\ref{cossin} one gets%
\begin{equation}
\sqrt{1+\left(  y_{0}^{\prime}\left(  x+u_{0}\right)  \right)  ^{2}}d\left(
x+u_{0}\right)  =\left(  \frac{ds}{dx}\right)  _{0}dx
\end{equation}
Now, by integrating both sides, we have%
\begin{equation}
L(x+u_{0})=\frac{L(L_{0})}{L_{0}}x
\end{equation}
with the boundary condition at $x=0$ applied. Then%
\begin{equation}
u_{0}\left(  x\right)  =L^{-1}\left(  \frac{L(L_{0})}{L_{0}}x\right)  -x
\end{equation}
where $L^{-1}(z)$ is the inverse function for $L(z)$.

One can see from Figure \ref{gen2DMot} that a point originally at $x$ is moved
horizontally to $x+u_{0}$ and then lifted vertically by an amount of $v_{0}$,
ending up at $y_{0}\left(  x+u_{0}\right)  $. Therefore, it follows that
$v_{0}\left(  x\right)  =y_{0}\left(  x+u_{0}\right)  $.

The boundary conditions on $u_{0}$ are automatically satisfied: $u_{0}%
(0)=L^{-1}\left(  0\right)  -0=0$ and
\begin{align}
u_{0}(L_{0})  &  =L^{-1}\left(  \frac{L(L_{0})}{L_{0}}L_{0}\right)  -L_{0}\\
&  =L^{-1}\left(  L(L_{0})\right)  -L_{0}=0
\end{align}
Also, $v_{0}=y_{0}\left(  x+u_{0}\right)  $ yields%
\begin{align*}
v_{0}(0)  &  =y_{0}\left(  0+u_{0}(0)\right)  =y_{0}\left(  0\right)  =0\\
v_{0}(L_{0})  &  =y_{0}\left(  L_{0}+u_{0}(L_{0})\right)  =y_{0}\left(
L_{0}\right)  =0
\end{align*}

\end{proof}

\subsection{Analytical verification of Theorem \ref{theo1}}

Before proceeding to numerical studies, a verification of the above theorem
can be presented. Consider the singly kinked initial condition as depicted in
Figure \ref{snglKink2Cases}(b), where tension equalization is assumed.
Clearly,%
\begin{equation}
\left(  \frac{ds}{dx}\right)  _{0}=\frac{\sqrt{a^{2}+h^{2}}+\sqrt{\left(
L_{0}-a\right)  ^{2}+h^{2}}}{L_{0}}%
\end{equation}
We first derive the forms of the initial displacements using the geometry
only. For this, it is sufficient to determine which point $(x_{1},0)$ is
mapped to $(a,h)$. From the figure is it seen that $x_{1}$ must be such that%
\begin{equation}
\frac{\sqrt{a^{2}+h^{2}}}{x_{1}}=\left(  \frac{ds}{dx}\right)  _{0}%
=\frac{L(L_{0})}{L_{0}}%
\end{equation}%
\begin{equation}
x_{1}=\frac{\sqrt{a^{2}+h^{2}}}{\sqrt{a^{2}+h^{2}}+\sqrt{\left(
L_{0}-a\right)  ^{2}+h^{2}}}L_{0}%
\end{equation}
Now, by Proposition \ref{prop1}, $u_{0}$ and $v_{0}$ are linear functions in
force-free segments. Hence,%
\begin{equation}
u_{0}\left(  x\right)  =\left\{
\begin{array}
[c]{ccc}%
\frac{a-x_{1}}{x_{1}}x &  & 0\leq x\leq x_{1}\\
\frac{a-x_{1}}{L_{0}-x_{1}}\left(  L_{0}-x\right)  &  & x_{1}\leq x\leq L_{0}%
\end{array}
\right.  \label{u0singKink}%
\end{equation}%
\begin{equation}
v_{0}\left(  x\right)  =\left\{
\begin{array}
[c]{ccc}%
\frac{h}{x_{1}}x &  & 0\leq x\leq x_{1}\\
\frac{h}{L_{0}-x_{1}}\left(  L_{0}-x\right)  &  & x_{1}\leq x\leq L_{0}%
\end{array}
\right.  \label{v0SingKink}%
\end{equation}

Next, we show this result using Theorem \ref{theo1}. The initial shape and the
length function are given by%
\begin{equation}
y_{0}\left(  x\right)  =\left\{
\begin{array}
[c]{ccc}%
\frac{h}{a}x &  & x\leq a\\
\frac{h}{L_{0}-a}\left(  L_{0}-x\right)  &  & x\geq a
\end{array}
\right.
\end{equation}

\begin{equation}
L\left(  z\right)  =\left\{
\begin{array}
[c]{ccc}%
s_{1}z &  & z\leq a\\
s_{2}z-\left(  s_{2}-s_{1}\right)  a &  & z\geq a
\end{array}
\right.
\end{equation}
where $z\in\left[  0,L_{0}\right]  $, $s_{1}=\sqrt{1+\left(  \frac{h}%
{a}\right)  ^{2}}$, and $s_{2}=\sqrt{1+\left(  \frac{h}{L_{0}-a}\right)  ^{2}%
}$. The inverse of the length function is%
\begin{equation}
L^{-1}\left(  z\right)  =\left\{
\begin{array}
[c]{ccc}%
\frac{z}{s_{1}} &  & z\leq s_{1}a\\
\frac{z+\left(  s_{2}-s_{1}\right)  a}{s_{2}} &  & z\geq s_{1}a
\end{array}
\right.
\end{equation}
where $z\in\left[  0,L(L_{0})\right]  $. Let
\begin{equation}
L(L_{0})=\sqrt{a^{2}+h^{2}}+\sqrt{\left(  L_{0}-a\right)  ^{2}+h^{2}}=L_{i}%
\end{equation}
Then, by Theorem \ref{theo1},%
\begin{equation}
u_{0}=L^{-1}\left(  \frac{L_{i}}{L_{0}}x\right)  -x
\end{equation}%
\begin{equation}
u_{0}=\left\{
\begin{array}
[c]{ccc}%
\frac{L_{i}/L_{0}}{s_{1}}x-x &  & x\leq\frac{L_{0}}{L_{i}}s_{1}a\\
\frac{L_{i}/L_{0}x-s_{1}a}{s_{2}}+a-x &  & x\geq\frac{L_{0}}{L_{i}}s_{1}a
\end{array}
\right.
\end{equation}
Note that $\frac{L_{0}}{L_{i}}s_{1}a$ is equal to $x_{1}$. After
manipulations, one gets Equation \ref{u0singKink} exactly. The switching point
for $y$, $x=a$, corresponds to that of $x+u_{0}$, which corresponds to that of
$u_{0}$, $x=x_{1}$. Therefore, we simply insert the piecewise definitions in
their corresponding places in $y(x+u_{0})$ to get%
\begin{align}
&  v_{0}=\\
&  \left\{
\begin{array}
[c]{ccc}%
\frac{h}{a}\left(  \frac{a-x_{1}}{x_{1}}x+x\right)  &  & x\leq x_{1}\\
\frac{h}{L_{0}-a}\left(  L_{0}-\left(  \frac{a-x_{1}}{L_{0}-x_{1}}\left(
L_{0}-x\right)  +x\right)  \right)  &  & x\geq x_{1}%
\end{array}
\right.
\end{align}
which, after simplifications, reduces to Equation \ref{v0SingKink}. This
verifies the theorem.

For other types of initial shape functions, obtaining the explicit form of
$u_{0}$ involving simple functions becomes almost impossible or quite
complicated. For example, for an initial sine or cosine shape, $L(z)$ involves
elliptic integrals. Nevertheless, numerical implementation is quite
straightforward, which is used in every example of the next section on
numerical experiments. This clearly demonstrates the usefulness of Theorem
\ref{theo1}.

\subsection{Initial conditions that are compatible with purely transverse
motions}

If $u_{0}\neq0$ at $t=0$, then horizontal motions for $t>0$ are unavoidable
because the related equation of motion becomes $u_{tt}=c^{2}u_{xx}$, for which
$u\left(  x,t\right)  =0$ is not a viable solution since it does not satisfy
the initial conditions. Therefore, in addition to $k=0$, another necessary
condition for purely transverse motions is $u_{0}\left(  x\right)  =0$. This
forces, by Theorem \ref{theo1}, that%
\begin{equation}
L\left(  x\right)  =\frac{L(L_{0})}{L_{0}}x
\end{equation}
This can happen only if the $y_{0}^{\prime}\left(  x\right)  $ is constant in
at most a piecewise manner. If $y_{0}^{\prime}\left(  x\right)  =0$ over the
whole domain then by the virtue of the boundary conditions $y_{0}\left(
x\right)  =0$, which we discard since it results in no motion. Hence,
$y_{0}^{\prime}\left(  x\right)  $ can be constant only in a piecewise manner
over finite regions, the union of which would equal $\left[  0,L_{0}\right]
$. Therefore, a consequence of $u_{0}\left(  x\right)  =0$ is that the initial
shape $y_{0}\left(  x\right)  $ can only have straight segments in a piecewise
continuous manner, simplest case of which is Euler's singly kinked initial
condition. This explains the apparent success of Euler's proposal.

Note that an initial shape with straight segments in a piecewise manner
restricts the initial horizontal forces to be only point forces, a result
which we obtained previously after a different line of reasoning.

The initial shape can be obtained in two ways: a) by allowing tension
equalization, i.e. with zero horizontal constraint force, b) by using initial
horizontal point forces at kinks. In the latter case, as soon as the string is
released the horizontal constraint forces disappear. The discontinuity in the
initial tension around kinks caused by the horizontal forces now creates an
imbalance due to the absence of them just after the release. This will cause
horizontal motions. Therefore, for purely transverse motions it is also
necessary that the tension be allowed to balance, which amounts to zero
horizontal force constraints.

If the initial tension is uniform over the whole string then the slopes of the
straight segments on either side a kink must be equal in magnitude and
opposite in sign in order to preserve force balance in horizontal direction
when the motion starts.\emph{ }It is simple to show that this must be true for
all kinks. That is, the magnitudes of the slopes of all segments must be the
same. Hence, we proved the following.

\begin{theorem}
\label{Theo2}The necessary and sufficient conditions for a string to execute
purely transverse motions are:

\begin{enumerate}
\item $k=\frac{T_{f}}{T_{0}+T_{f}}=0$, i.e. either $E=0$, $A_{f}=0,$ or
$T_{0}\gg T_{f}$.

\item Initial shape must be formed by straight segments around discretely
distributed kink points,

\item There are no initial horizontal force constraints, or, equivalently, the
initial tension is uniform everywhere,

\item Magnitudes of slopes of all segments must be the same and signs of those
on each side of a kink point must be opposite.
\end{enumerate}
\end{theorem}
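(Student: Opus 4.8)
The plan is to prove the two implications separately, reading \textquotedblleft purely transverse motion\textquotedblright\ as the requirement that the horizontal displacement field vanish identically, $u(x,t)\equiv 0$ for all $t\ge 0$, with the horizontal constraint $q(x,t)=0$ for $t>0$ and zero initial velocity. The whole argument rests on the decoupling that occurs at $k=0$ together with Theorem \ref{theo1}, which converts statements about the initial shape $y_{0}(x)$ into statements about the horizontal field $u_{0}(x)$.

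For necessity I would extract the four conditions in order. First, substituting $u\equiv 0$ (so that $u_{x}=u_{xx}=u_{tt}=0$) into Equation \ref{Main2DEqU} leaves $0=-k\,\partial_{x}\!\left(1/\sqrt{1+v_{x}^{2}}\right)$; since any genuine motion has $v_{x}$ non-constant in $x$, this derivative is not identically zero, forcing condition 1, $k=0$. With $k=0$ the horizontal equation becomes $u_{tt}=c^{2}u_{xx}$, and because $u\equiv 0$ must solve it with $u(x,0)=u_{0}$ and $u_{t}(x,0)=0$, I obtain the necessary condition $u_{0}\equiv 0$. Feeding $u_{0}\equiv 0$ into Theorem \ref{theo1} gives $L(x)=\tfrac{L(L_{0})}{L_{0}}x$, hence $(ds/dx)_{0}=\sqrt{1+(y_{0}')^{2}}$ is constant, so $|y_{0}'|$ is piecewise constant; discarding the trivial $y_{0}\equiv 0$ this yields condition 2 (straight segments joined at discrete kinks) and the equal-magnitude half of condition 4. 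Constancy of $(ds/dx)_{0}$ then gives uniform tension through the model $T=(T_{0}+T_{f})(ds/dx)-T_{f}$, and I would argue condition 3 by noting that if the shape were instead held by horizontal point forces at the kinks, their removal at release leaves a tension discontinuity whose horizontal imbalance drives $u$ away from zero, contradicting $u\equiv 0$; hence the tension must already be balanced. Finally, with uniform $T$ the horizontal force balance at each kink reduces to $T\cos\theta_{\mathrm{left}}=T\cos\theta_{\mathrm{right}}$, so the two slopes there are equal in magnitude and, in order that the horizontal tension components cancel rather than add, opposite in sign; propagating this across successive kinks of the zigzag forces all slope magnitudes to agree, completing condition 4.

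For sufficiency I would run the chain backwards. Condition 1 together with the shape hypotheses gives, via Theorem \ref{theo1}, a linear $L(x)$ and therefore $u_{0}\equiv 0$, while uniform tension with opposite signs at each kink guarantees horizontal balance so that the required initial horizontal force vanishes; explicitly, at $k=0$ one has $q=-(T_{0}+T_{f})\,\partial_{x}(1+u_{x})=-(T_{0}+T_{f})u_{xx}$, which is zero when $u_{0}\equiv 0$. The decoupled homogeneous problem $u_{tt}=c^{2}u_{xx}$ with $u(x,0)=0$ and $u_{t}(x,0)=0$ then has the unique solution $u\equiv 0$, for which $q(x,t)=-(T_{0}+T_{f})u_{xx}=0$ at every $t$; thus the motion is genuinely transverse with no horizontal constraint required for $t>0$.

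I expect the main obstacle to be the rigorous bookkeeping of condition 4: the force-balance step treats each kink as a point at which the horizontal components of the two adjacent (uniform) tensions must cancel, and one must argue carefully that this happens only for opposite-signed, equal-magnitude slopes and that the equal-magnitude conclusion then propagates to \emph{every} segment of a multi-kink shape. The second delicate point is insisting that transversality hold dynamically for all $t>0$ without kinematic rails, which requires confirming $q(x,t)=0$ from the evolution rather than merely at $t=0$; the homogeneity of the $k=0$ equation and the uniqueness behind Proposition \ref{prop1} are what make this step clean.
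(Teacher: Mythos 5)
Your proof follows essentially the same route as the paper's: the same extraction of $k=0$ by substituting $u\equiv 0$ into Equation \ref{Main2DEqU}, the same use of the decoupled equation $u_{tt}=c^{2}u_{xx}$ to force $u_{0}\equiv 0$, the same appeal to Theorem \ref{theo1} to convert $u_{0}\equiv 0$ into the piecewise-straight, equal-slope-magnitude shape conditions, and the same physical arguments for tension equalization (release of kink point forces causing imbalance) and for the equal-and-opposite slopes at kinks. The only difference is packaging: you fold the sufficiency direction directly into the argument via uniqueness for the homogeneous wave equation, whereas the paper establishes that converse separately in the subsequent Proposition (using the special length function $L(z)=(\sec\alpha)z$) and the omitted proof of Theorem \ref{PureTransTheorem} --- the content is the same.
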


It is now seen clearly that pure transverse vibrations of an ideal string are
indeed rarities. For a single kink case, the string must be lifted at the
mid-span. Therefore, Euler's proposal is acceptable only when the kink point
is at the middle. Otherwise, pure transverse motions are not possible.

For multiple kinks, the situation becomes interesting. If we start with a
positive slope at $x=0$, without loss of generality, and if each segment is
represented by a vector $\bar{r}_{i}$, $i=1,\cdots,N$, where $N$ is the number
of segments, then we can write down the following vector loop equation.%
\begin{equation}
\sum_{i=1}^{N}\bar{r}_{i}=\sum_{i=1}^{N}r_{i}\left[
\begin{array}
[c]{c}%
\cos\alpha\\
\left(  -1\right)  ^{i+1}\sin\alpha
\end{array}
\right]  =\left[
\begin{array}
[c]{c}%
L_{0}\\
0
\end{array}
\right]
\end{equation}
where $r_{i}$ is the length of $i^{\text{th}}$ segment and $\alpha$ is the
magnitude of the slope angle for all segments. By letting $L_{i}=r_{i}%
\cos\alpha$, projected length on $x$-axis, one gets%
\begin{equation}
\sum_{i=1}^{N}\left[
\begin{array}
[c]{cc}%
L_{i} & \left(  -1\right)  ^{i+1}L_{i}\tan\alpha
\end{array}
\right]  ^{T}=\left[
\begin{array}
[c]{cc}%
L_{0} & 0
\end{array}
\right]  ^{T}%
\end{equation}
which leads to the following scalar equations.%
\begin{align}
\sum_{i=1}^{N}L_{i}  &  =L_{0}\label{SegmentLengths1}\\
\sum_{i=1}^{N}\left(  -1\right)  ^{i+1}L_{i}  &  =0 \label{SegmentLengths2}%
\end{align}
regardless of $\alpha$. Note that the special cases $\alpha=\pm\frac{\pi}{2}$
are not viable. These equations have a unique solution only when $N=2$, namely%
\[
L_{1}=L_{2}=\frac{L_{0}}{2}%
\]
which correspond to the example of a string lifted at the middle point, as was
discussed earlier.

Equations \ref{SegmentLengths1} and \ref{SegmentLengths2} lead to%
\begin{equation}
\sum_{i=1\text{ odds}}^{N}L_{i}=\sum_{i=2\text{ evens}}^{N}L_{i}=\frac{L_{0}%
}{2}%
\end{equation}

For $N>2$ one would have a family of solutions that depend on $N-2$
parameters. Let $L_{1}$ and $L_{2}$ be the dependent variables and $L_{i}$,
$i>2$, be taken as free parameters. Then, the solutions for $L_{1}$ and
$L_{2}$ in terms of the free parameters are%
\begin{align*}
L_{1}  &  =\frac{L_{0}}{2}-\sum_{i=3}^{N}L_{i}\text{ \ \ (}i\text{ is odd)}\\
L_{2}  &  =\frac{L_{0}}{2}-\sum_{i=4}^{N}L_{i}\text{ \ \ (}i\text{ is even)}%
\end{align*}
provided that $L_{i}\geqslant0$ for all $i$.

Note that for any given number of segments, $N$, one gets all others,
$n=2,3,\cdots,N-1$ by setting the lengths of $N-n$ segments to zero.

As a demonstration we consider $N=3$, for which we have the following result.%
\begin{align*}
L_{1}  &  =\frac{L_{0}}{2}-L_{3}\\
L_{2}  &  =\frac{L_{0}}{2}%
\end{align*}
where $L_{3}\in\left[  0,\frac{L_{0}}{2}\right]  $ is the free parameter.
Figure \ref{FigIniShapePureTrans} shows three examples of initial shapes with
two kinks. For all cases the slope angle is $\pm45%
{{}^\circ}%
$. The solid line is when $L_{3}=\frac{1}{4}$ and the dashed line below
corresponds to $L_{3}=0.4$. The dotted-dashed line above is for $L_{3}=0$. It
is seen that the single kink solution is a limit case of two-kink solutions,
as discussed earlier. It is easily seen how all other solutions can be
obtained geometrically. Also note that the points where the shape intersects
the $x-$axis are vibration nodes. Figure \ref{inicondpuretranstriplekink}
illustrates initial shapes with three kinks.%

\begin{figure}
[ptb]
\begin{center}
\includegraphics[scale=0.21]%
{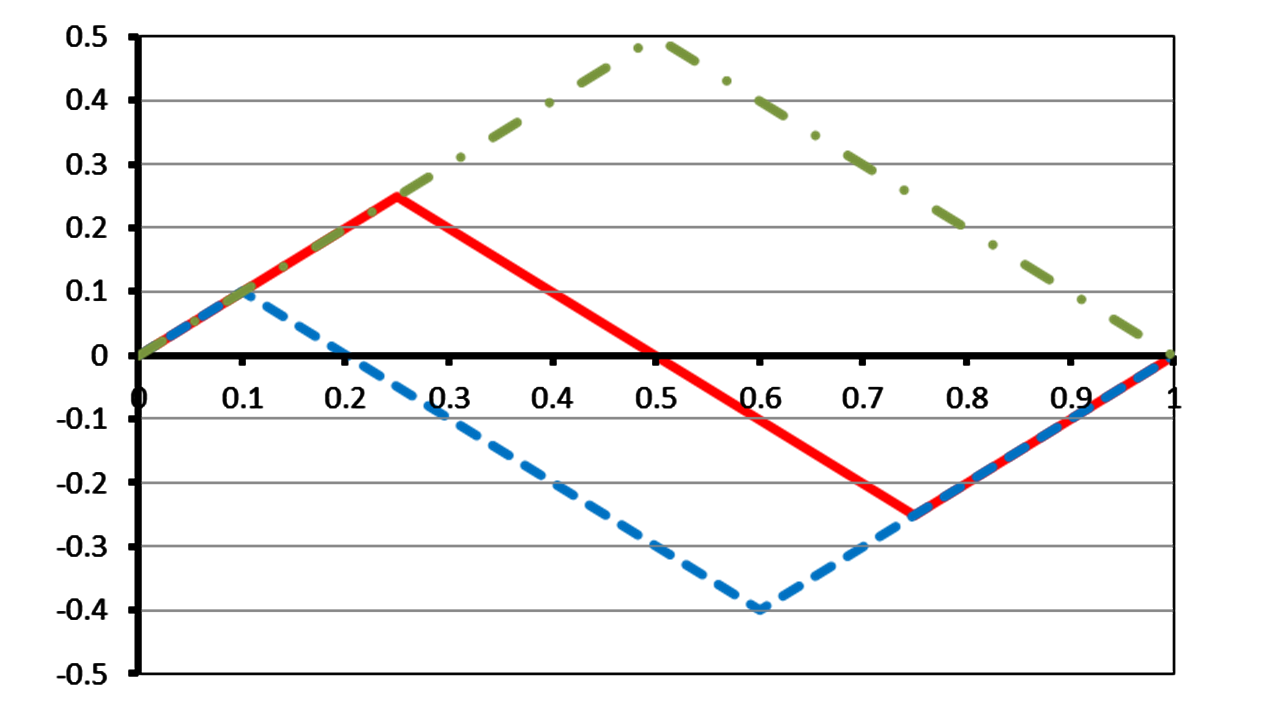}%
\caption{Initial shapes with two kinks and with equal and opposite slopes on
each side of kinks. All slope angles are $\pm45{{}^\circ}$.}%
\label{FigIniShapePureTrans}%
\end{center}
\end{figure}
%

\begin{figure}
[ptb]
\begin{center}
\includegraphics[scale=0.21]%
{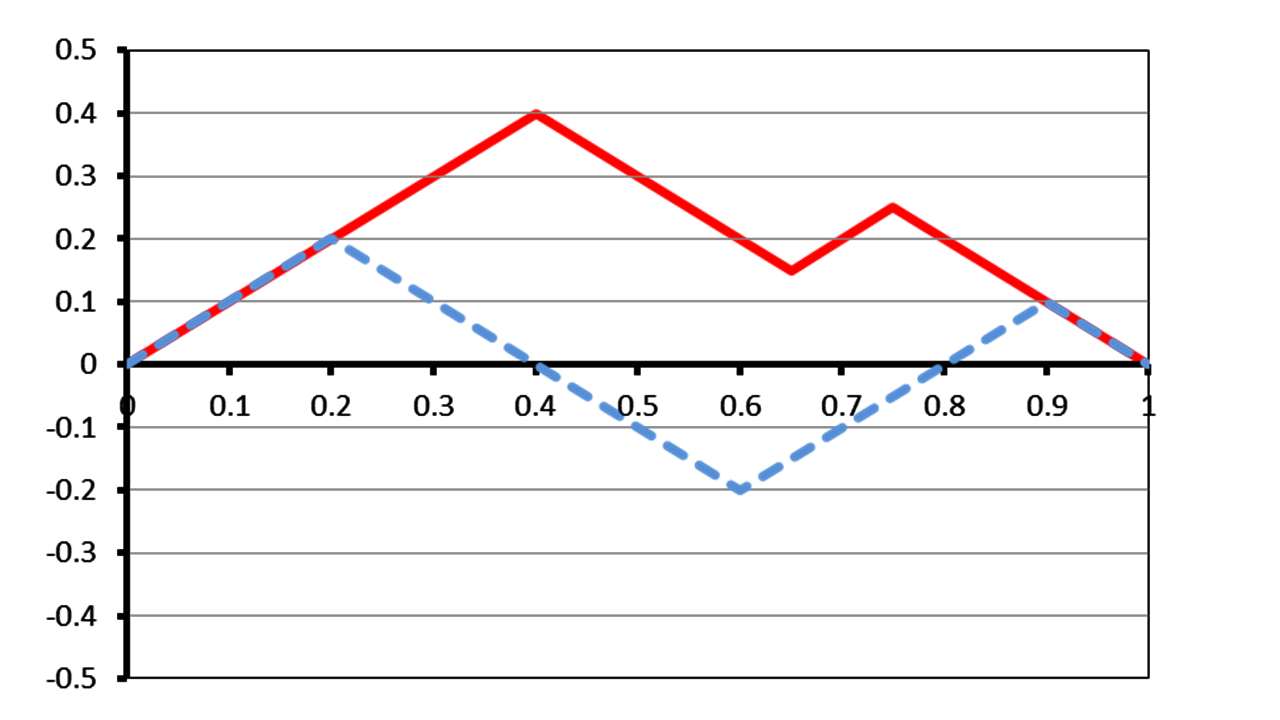}%
\caption{Initial shapes with three kinks and with equal and opposite slopes on
each side of kinks. All slope angles are $\pm45{{}^\circ}$.}%
\label{inicondpuretranstriplekink}%
\end{center}
\end{figure}

Another interesting, yet expected, result pertains to the form of the length
function. Since the slopes of the segments are constant up to a sign
difference the integrand in the definition of the length function becomes%
\begin{equation}
\sqrt{1+\left(  y_{0}^{\prime}\left(  x\right)  \right)  ^{2}}=\sqrt
{1+\tan^{2}\alpha}=\sec\alpha
\end{equation}
regardless of the segment. Therefore, despite having a piecewise defined
shape, the length function is simply given by%
\begin{equation}
L\left(  z\right)  =\left(  \sec\alpha\right)  z \label{specialLengthFunc}%
\end{equation}
for all $z\in\left[  0,L_{0}\right]  $, the inverse of which is
\begin{equation}
L^{-1}\left(  z\right)  =\frac{1}{\sec\alpha}z=\left(  \cos\alpha\right)  z
\label{SpecialLengthFuncInverse}%
\end{equation}
This result can be used to prove the following.

\begin{proposition}
Provided that the initial tension is equalized, the initial horizontal
displacement is zero, $u_{0}\left(  x\right)  =0$, if and only if the initial
shape is formed by straight segments between isolated kinks, on each side of
which the segment slopes are equal and opposite.
\end{proposition}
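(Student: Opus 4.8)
The plan is to collapse the entire statement into a single scalar condition on the length function by way of Theorem \ref{theo1}, and then translate that condition into geometry. The hypothesis that the initial tension is equalized places us precisely in the setting of Theorem \ref{theo1}, so throughout I may write $u_0(x) = L^{-1}\left(\frac{L(L_0)}{L_0}x\right) - x$. Since $L$ is continuous and strictly increasing, hence invertible, the identity $u_0(x) = 0$ for all $x$ holds if and only if $L^{-1}\left(\frac{L(L_0)}{L_0}x\right) = x$; applying $L$ to both sides, this is equivalent to the linearity condition $L(x) = \frac{L(L_0)}{L_0}x$ on $[0,L_0]$. Everything then hinges on characterizing when the length function is linear.

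Next I would differentiate. On any interval where $y_0$ is differentiable, $L'(x) = \sqrt{1+(y_0'(x))^2}$, so linearity of $L$ forces $\sqrt{1+(y_0'(x))^2}$ to equal the constant $L(L_0)/L_0$ wherever it is defined. Writing this constant as $\sec\alpha$, the condition becomes $|y_0'(x)| = \tan\alpha$: the slope magnitude is one and the same constant on every smooth piece. A constant slope on an interval means the shape is a straight segment there, and the piecewise-differentiability hypothesis on $y_0$ guarantees that the non-differentiability points are isolated and separate these segments. (The degenerate subcase $\alpha = 0$ yields $y_0 \equiv 0$ by the boundary conditions and is discarded as the no-motion configuration, consistent with the earlier discussion.) This establishes the "straight segments between isolated kinks" portion of the conclusion.

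For the sign condition I would argue directly from the meaning of a kink. At a genuine kink the two one-sided derivatives must differ; since each has magnitude $\tan\alpha$, the only possibility is that one equals $+\tan\alpha$ and the other $-\tan\alpha$. Were they to agree in sign, $y_0$ would be differentiable there and the point would not be a kink at all. Hence the slopes on either side of each kink are equal in magnitude and opposite in sign, finishing the forward direction.

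The converse is the short direction and is essentially Equations \ref{specialLengthFunc}--\ref{SpecialLengthFuncInverse}. If the shape consists of straight segments with slopes $\pm\tan\alpha$, then $\sqrt{1+(y_0'(x))^2} = \sec\alpha$ identically, so $L(z) = (\sec\alpha)z$ and $L^{-1}(z) = (\cos\alpha)z$. Substituting into the Theorem \ref{theo1} formula, with $L(L_0)/L_0 = \sec\alpha$, gives $u_0(x) = (\cos\alpha)(\sec\alpha)x - x = 0$. I expect the only delicate point to be the careful treatment of the kink sign condition: one must be explicit that a kink is by definition a point of non-differentiability, so that the equal-magnitude slopes are forced to alternate in sign rather than merely being permitted to.
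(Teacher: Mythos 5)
Your proof is correct, and your converse direction is essentially identical to the paper's: both substitute Equations \ref{specialLengthFunc} and \ref{SpecialLengthFuncInverse} into the formula of Theorem \ref{theo1} and compute $u_{0}=(\cos\alpha)(\sec\alpha)x-x=0$. The forward direction, however, is where you genuinely diverge. The paper disposes of it in one line by citing Theorem \ref{Theo2}; but the supporting discussion for that theorem establishes the equal-magnitude and opposite-sign slope conditions \emph{dynamically}, by requiring horizontal force balance at the kinks to be preserved ``when the motion starts,'' i.e.\ it is really a statement about purely transverse motion rather than about the initial datum $u_{0}=0$ alone. You instead work entirely kinematically: $u_{0}\equiv0$ is equivalent (via invertibility of $L$) to linearity $L(x)=\frac{L(L_{0})}{L_{0}}x$, whose differentiation forces $|y_{0}'|\equiv\tan\alpha$ on every smooth piece --- which simultaneously yields straightness of segments \emph{and} equality of all slope magnitudes --- and then the opposite-sign condition falls out of the mere definition of a kink as a point of non-differentiability. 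This buys a self-contained, purely geometric proof of what is, after all, a purely geometric statement, and it quietly repairs the looseness in the paper's citation (Theorem \ref{Theo2} proves the implication from transverse motion, not literally from $u_{0}=0$). The only point you should tighten is the passage from ``constant slope magnitude on a piece'' to ``straight segment'': constancy of $|y_{0}'|$ does not by itself fix the sign within a piece, so you need the continuity of $y_{0}'$ on each piece (or Darboux's intermediate-value property of derivatives) to rule out a sign flip at an interior point; with $\tan\alpha\neq0$ this makes $y_{0}'$ itself constant there, as required.
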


\begin{proof}
That $u_{0}=0$ implies the initial shape is as declared in the proposition was
proven in Theorem \ref{Theo2}. For the converse, one assumes that the initial
shape is as declared. Then, by Theorem \ref{theo1} (since the tension is
uniform), and, equations \ref{specialLengthFunc} and
\ref{SpecialLengthFuncInverse}, one would have%
\begin{align}
u_{0}  &  =L^{-1}\left(  \frac{L(L_{0})}{L_{0}}x\right)  -x\\
&  =L^{-1}\left(  \frac{\left(  \sec\alpha\right)  L_{0}}{L_{0}}x\right)  -x\\
&  =L^{-1}\left(  \left(  \sec\alpha\right)  x\right)  -x\\
&  =\left(  \cos\alpha\right)  \left(  \sec\alpha\right)  x-x=0
\end{align}
which completes the proof of the proposition.
\end{proof}

This proposition now leads to the following theorem, the proof of which is
straightforward and, therefore, omitted.

\begin{theorem}
\label{PureTransTheorem}Given that $k=0$ and the initial tension is equalized,
an ideal string executes purely transverse motions if and only if the initial
shape is formed by straight segments between isolated kinks, on each side of
which the segment slopes are equal and opposite.
\end{theorem}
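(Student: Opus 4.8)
The plan is to collapse this theorem into the preceding Proposition by recognizing that, once $k=0$, purely transverse motion is equivalent to the single condition $u_{0}(x)=0$. First I would recall that setting $k=0$ decouples the governing equations \ref{Main2DEqU} and \ref{Main2DEqV} into two independent linear wave equations, $u_{tt}=c^{2}u_{xx}$ and $v_{tt}=c^{2}v_{xx}$, each carrying the fixed-end boundary conditions and (as throughout this discussion) zero initial velocity. Under this decoupling the horizontal dynamics are entirely insensitive to the vertical profile, which is exactly what makes the argument clean.

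The core step is the equivalence ``purely transverse $\Longleftrightarrow u_{0}=0$'' under the standing hypotheses $k=0$ and equalized initial tension. The forward direction is immediate: if the motion is purely transverse then $u(x,t)\equiv 0$, so evaluating at $t=0$ gives $u_{0}(x)=u(x,0)=0$. For the converse I would invoke uniqueness for the wave equation: if $u_{0}=0$, then $u$ solves $u_{tt}=c^{2}u_{xx}$ with vanishing initial displacement, vanishing initial velocity, and vanishing boundary data, whence $u\equiv 0$ and the motion is purely transverse. This reproduces, in a single line, the necessity argument already given above, namely that $u_{0}\neq 0$ forces horizontal motion for $t>0$.

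With this equivalence in hand, the theorem follows by chaining it with the preceding Proposition. Since the tension is assumed equalized, that Proposition states precisely that $u_{0}(x)=0$ if and only if the initial shape is formed by straight segments between isolated kinks with equal and opposite slopes on each side of every kink. Combining the two equivalences yields ``purely transverse $\Longleftrightarrow u_{0}=0 \Longleftrightarrow$ the stated shape condition,'' which is the assertion of the theorem.

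The only real obstacle is a regularity one in the converse of the core step. Because the admissible shapes are piecewise linear, $v_{0}$ (and hence $v$) is only piecewise differentiable, so one might worry about the meaning of the wave equation at kinks. The decoupling at $k=0$ dissolves this worry: the $u$-equation has identically zero data and so is solved by $u\equiv 0$ regardless of the regularity of $v$, while the $v$-equation is the classical wave equation whose kinked initial data is handled exactly as in the classical theory, via d'Alembert's formula or weak solutions. Hence no delicate estimate is needed, and the proof is indeed straightforward.
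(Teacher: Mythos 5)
Your proposal is correct and takes essentially the same route the paper intends: the paper omits the proof as ``straightforward'' precisely because it follows by chaining the preceding Proposition (given equalized tension, $u_{0}(x)=0$ if and only if the stated shape condition) with the already-established fact that, under $k=0$ and zero initial velocity, purely transverse motion is equivalent to $u_{0}(x)=0$. Your explicit uniqueness argument for the converse ($u_{0}=0$ forces $u\equiv 0$) and the regularity remark at kinks merely fill in details the paper leaves implicit.
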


In the sequel, the effect of initial conditions is explicitly demonstrated by
numerical experiments. Further, we also present, from the results of actual
experiments, the fact that a string with conditions close to the ideal cannot
help but execute horizontal motions.

\section{NUMERICAL EXPERIMENTS}

In this section, we present numerical analyses of various interesting cases in
order to demonstrate the findings of previous sections. Both the process of
setting up of initial conditions and resulting solutions are discussed. For
verification, the solutions are obtained, whenever possible, on two different
numerical analysis platforms using different methods.

\subsection{Two distinct cases with the same linear shape}

In Figure \ref{snglKink2Cases}(a), the tensions are different in the left and
the right regions. By Proposition \ref{prop1}, the extension ratio in the left
region is $\sqrt{1+\left(  h/a\right)  ^{2}}$. Since points $x=0$ and $x=a$
are mapped to themselves, again by Proposition \ref{prop1}, $u_{0}(x)=0$,
which, after applying the left boundary condition, gives $v_{0,x}=h/a$, or
$v_{0}=(h/a)x$. Similar results are obtained for the right region. Thus, we
have $u_{0}=0$ and%
\begin{equation}
v_{0}=\left\{
\begin{array}
[c]{cc}%
\frac{h}{a}x & 0\leqslant x\leqslant a\\
\frac{h}{L_{0}-a}\left(  L_{0}-x\right)  & a\leqslant x\leqslant L_{0}%
\end{array}
\right.
\end{equation}

The case of Figure \ref{snglKink2Cases}(b), in which the tension is equalized,
was studied earlier and the displacement fields were given in equations
\ref{u0singKink} and \ref{v0SingKink}.

The two problems in Figure \ref{snglKink2Cases} are quite different from each
other, although they are governed by the same equations of motion and the
initial shape of the string looks the same. In Figure \ref{snglKink2Cases}(a),
the points initially move vertically up. However, as soon as the motion starts
the points will also execute horizontal motions due to the imbalance between
the tensions, unless $a=L_{0}/2$. Whereas, in Figure \ref{snglKink2Cases}(b)
the points will certainly undergo horizontal motions, because they are already
displaced horizontally in the initial configuration.%
\begin{figure}
[h]
\begin{center}
\includegraphics[scale=0.39]%
{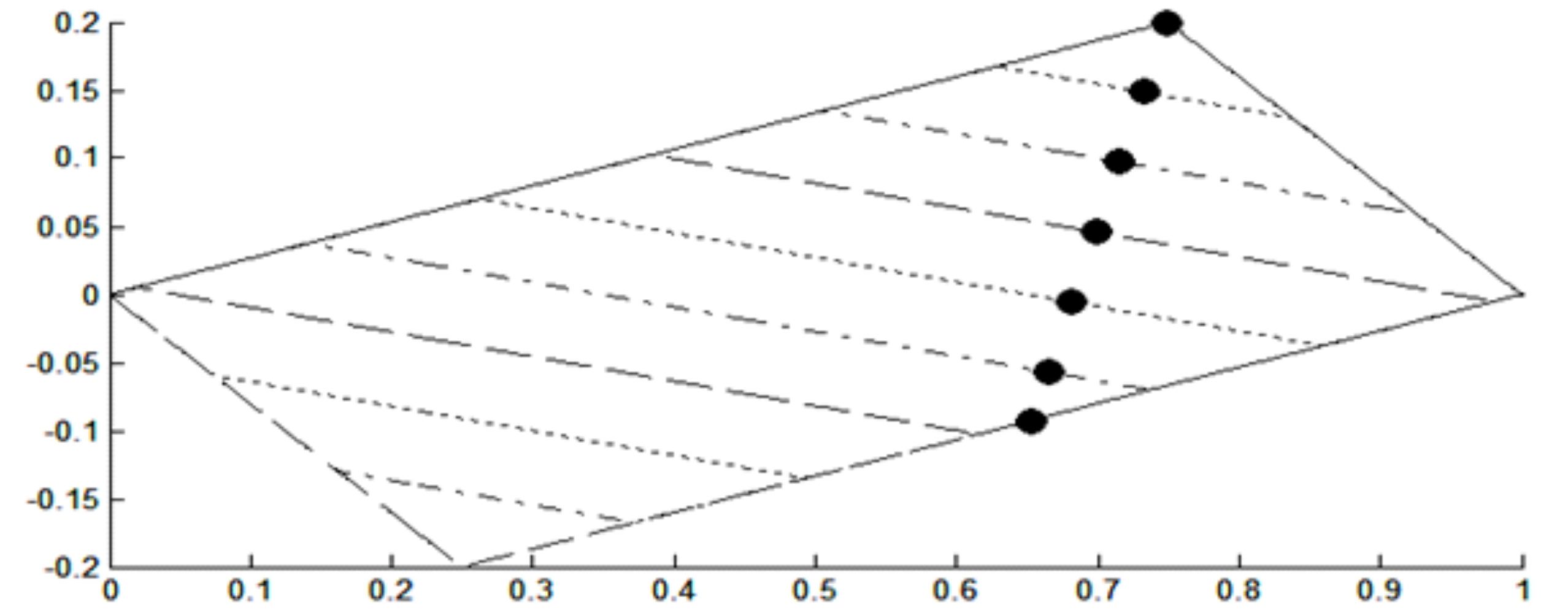}%
\caption{String with an initial single kink in motion. The dark dot is the
path of the kink point. Case: $k=0$.}%
\label{snglKinkMot}%
\end{center}
\end{figure}

Figure \ref{snglKinkMot} shows the motion for case (b) and the trace of the
kink point. All points, except the boundaries, move such that they dwell for a
finite duration when they reach the outer envelopes. Then, they move in the
opposite direction starting with infinite accelerations, a behavior that can
be easily observed using Euler's solutions. This is an unavoidable consequence
of the chosen material model.%
\begin{figure}
[h]
\begin{center}
\includegraphics[scale=0.45]%
{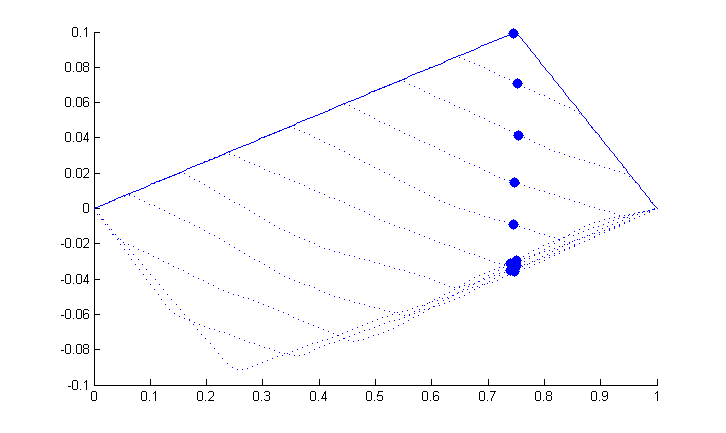}%
\caption{A singly kinked string solution. The kink point was initially
constrained to move vertically only. Case: $k>0$.}%
\label{singkinkclas}%
\end{center}
\end{figure}
\begin{figure}
[h]
\begin{center}
\includegraphics[scale=0.45]%
{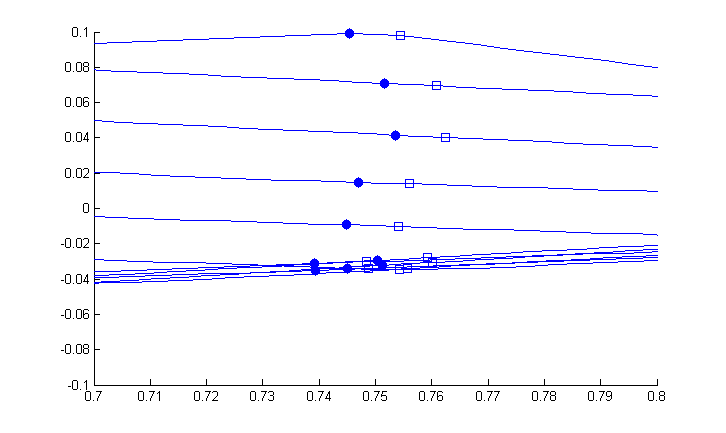}%
\caption{Detail around the single kink (initially constrained). The points
first move towards right due to higher tension, then towards left due to
horizontal oscillation.}%
\end{center}
\end{figure}
\begin{figure}
[h]
\begin{center}
\includegraphics[scale=0.5]%
{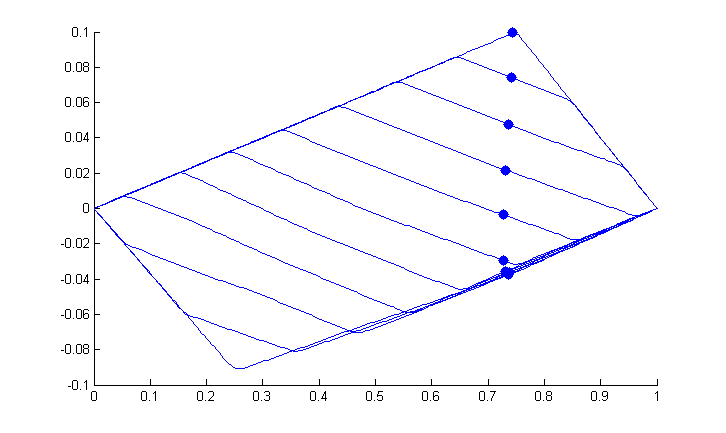}%
\caption{A singly kinked string solution. The tension is equalized initially.
The final kink point originates from a point in the left region. Case: $k>0$.}%
\label{singknknew}%
\end{center}
\end{figure}
\begin{figure}
[h]
\begin{center}
\includegraphics[scale=0.45]%
{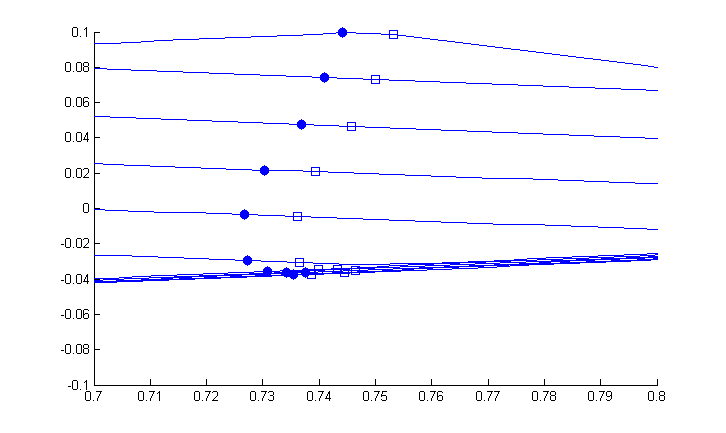}%
\caption{Detail around the single kink (initially relaxed). The points first
move towards left due to initial horizontal force field, then towards right
due to horizontal oscillation.}%
\label{singknknewdetail}%
\end{center}
\end{figure}

\subsection{Linear shape with a double kink}%

\begin{figure}
[h]
\begin{center}
\includegraphics[scale=0.3]%
{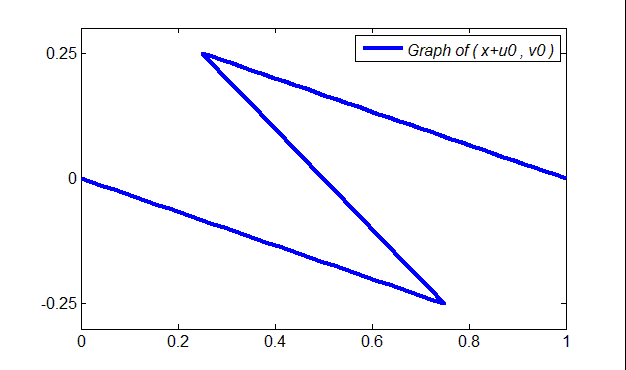}%
\caption{The graph of $(x+u_{0},v_{0})$ for the doubly kinked string.}%
\label{baddblkinkgraph}%
\end{center}
\end{figure}
\begin{figure}
[h]
\begin{center}
\includegraphics[scale=0.3]%
{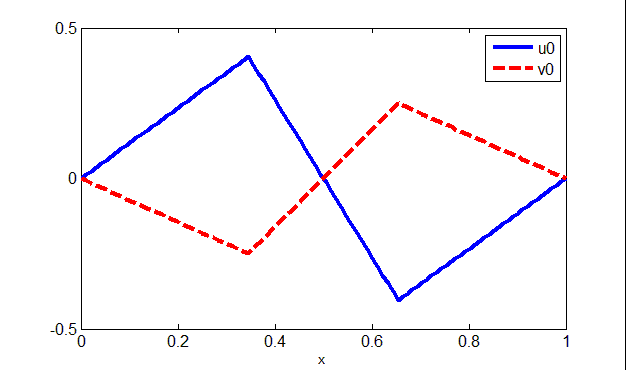}%
\caption{Displacement fields for a doubly kinked string.}%
\label{baddblkink}%
\end{center}
\end{figure}
%

\begin{figure}
[h]
\begin{center}
\includegraphics[scale=0.35]%
{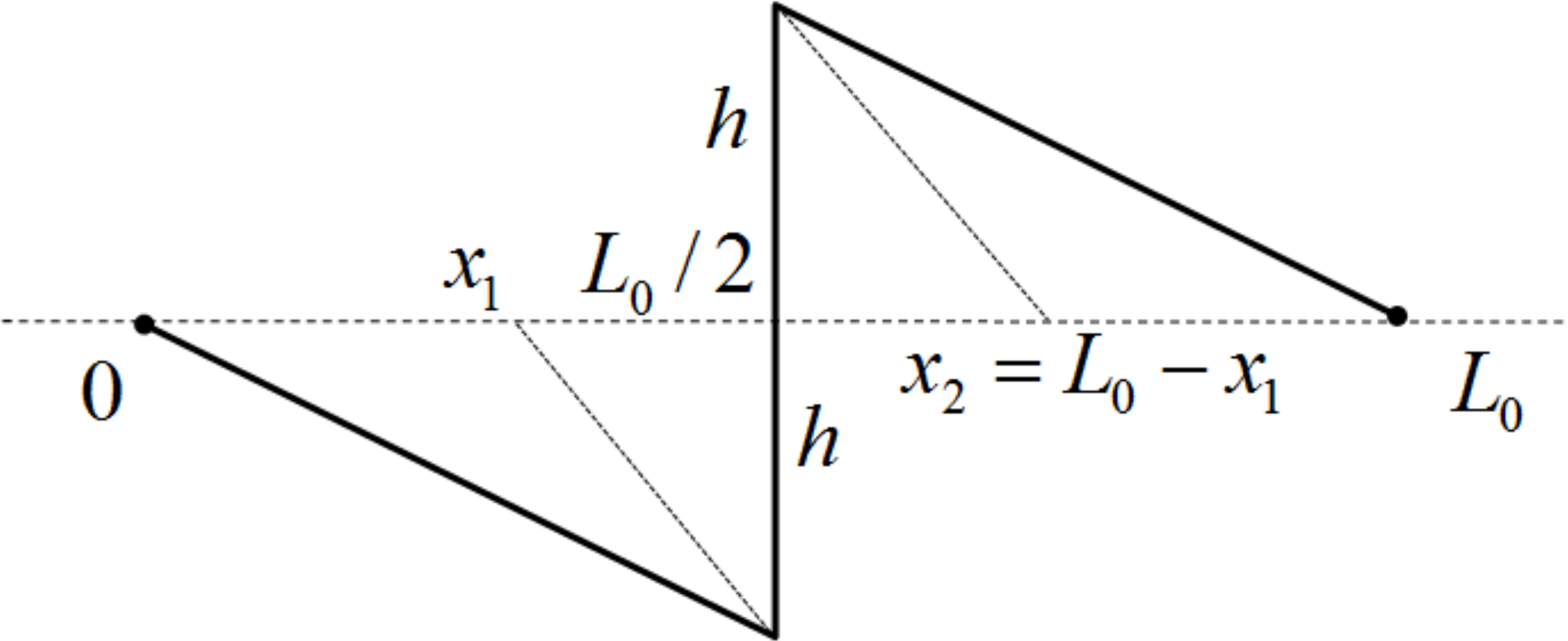}%
\caption{String with an initial double kink and infinite slope at midpoint.}%
\label{dblKinkIC}%
\end{center}
\end{figure}

The doubly kinked initial shape shown in Figure \ref{baddblkinkgraph}\ is
presented as another interesting case, which the classical model has no way of
representing. A general case of initial displacement gradients is shown in
Figure \ref{baddblkink}. A particular case is given in Figure \ref{dblKinkIC}.
The string is allowed to equalize the tension everywhere. Therefore, the
extension ratio is constant everywhere in the string. Performing similar
analyses as was done above, we obtain the following initial conditions.%
\begin{equation}
u_{0}\left(  x\right)  =\left\{
\begin{array}
[c]{ccc}%
\frac{\frac{L_{0}}{2}-x_{1}}{x_{1}}x &  & 0\leq x\leq x_{1}\\
\frac{L_{0}}{2}-x &  & x_{1}\leq x\leq L_{0}-x_{1}\\
-\frac{\left(  \frac{L_{0}}{2}-x_{1}\right)  \left(  L_{0}-x\right)  }{x_{1}}
&  & L_{0}-x_{1}\leq x\leq L_{0}%
\end{array}
\right.
\end{equation}%
\begin{equation}
v_{0}\left(  x\right)  =\left\{
\begin{array}
[c]{ccc}%
-\frac{h}{x_{1}}x &  & 0\leq x\leq x_{1}\\
\frac{h\left(  2x-L_{0}\right)  }{L_{0}-2x_{1}} &  & x_{1}\leq x\leq
L_{0}-x_{1}\\
\frac{h}{x_{1}}\left(  L_{0}-x\right)  &  & L_{0}-x_{1}\leq x\leq L_{0}%
\end{array}
\right.
\end{equation}
%

\begin{figure}
[h]
\begin{center}
\includegraphics[scale=0.5]%
{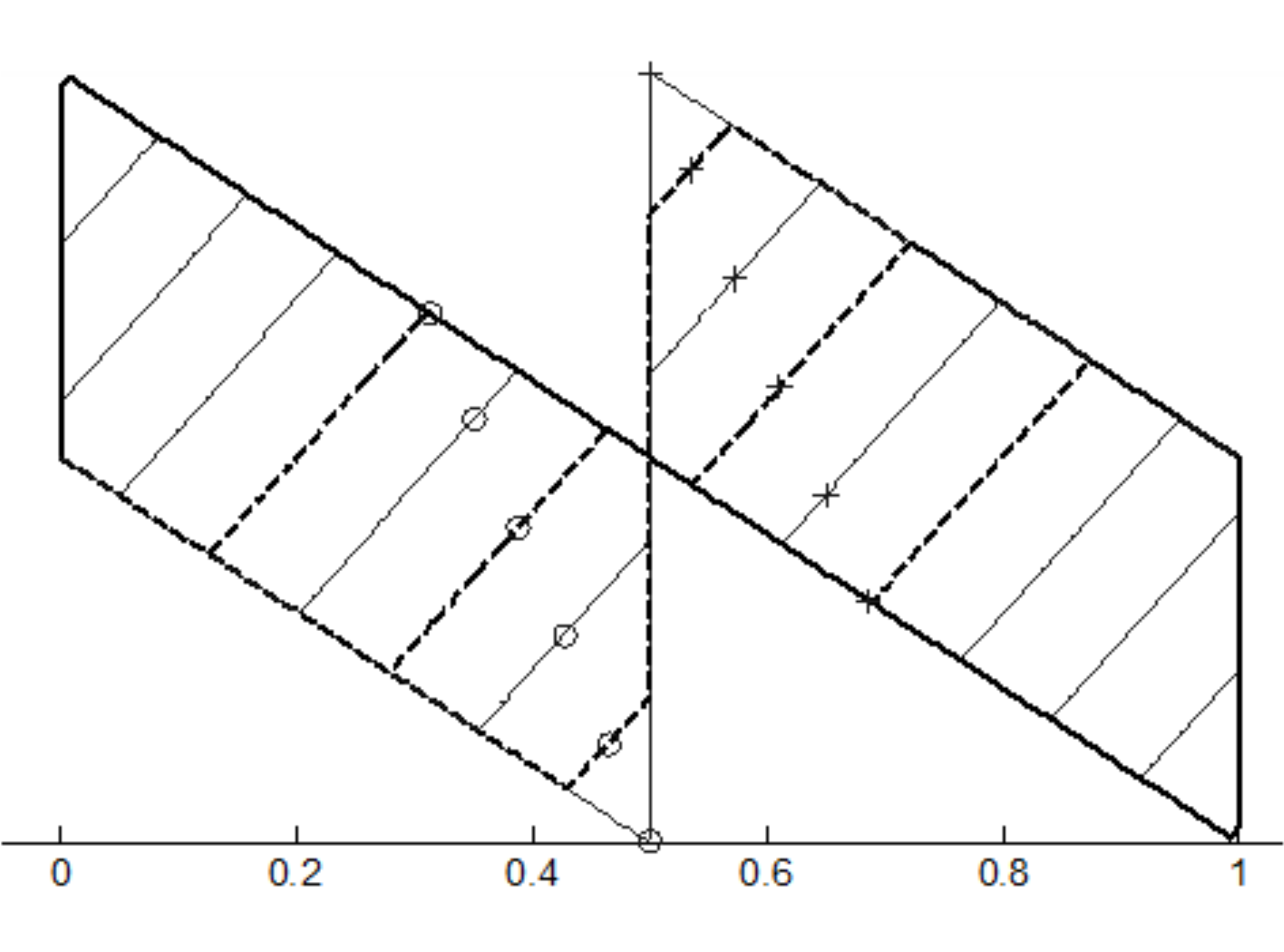}%
\caption{Motion of initially doubly kinked string. Case: $k=0$.}%
\label{dblKinkInMot}%
\end{center}
\end{figure}

Figure \ref{dblKinkInMot} shows the results of numerical analysis for half a
period. The thick solid line is the shape at the end of half a period. The
motion closely follows Euler's solutions. All points move along parallel,
non-vertical lines and exhibit the characteristic dwelling on envelopes. The
midpoint is a vibration node.%

\begin{figure}
[h]
\begin{center}
\includegraphics[scale=0.44]%
{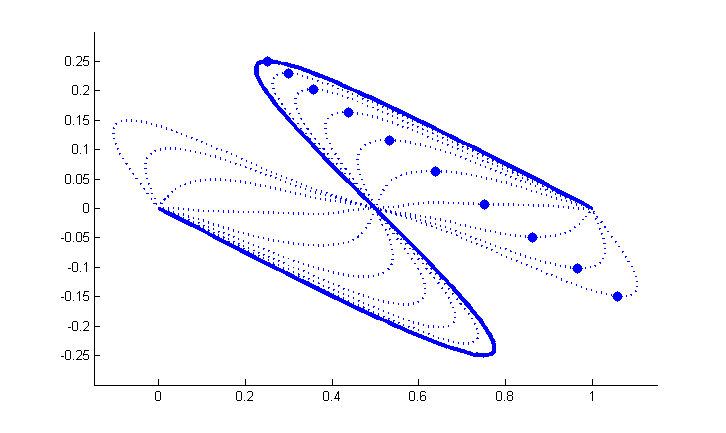}%
\caption{A smoothed double kink solution. The initial conditions were
$u_{0}=0.5\sin(2\pi x)$ and $v_{0}=-0.25\sin(2\pi x)$.}%
\label{baddblkink2}%
\end{center}
\end{figure}

Figure \ref{baddblkink2} shows the solutions to initial conditions
approximating a sharp double kink with a smoothed one. The result is obtained
with surprisingly simple functions for $u_{0}$ and $v_{0}$.

\subsection{Non-taut string}

\begin{center}%
\begin{figure}
[h]
\begin{center}
\includegraphics[scale=0.3]%
{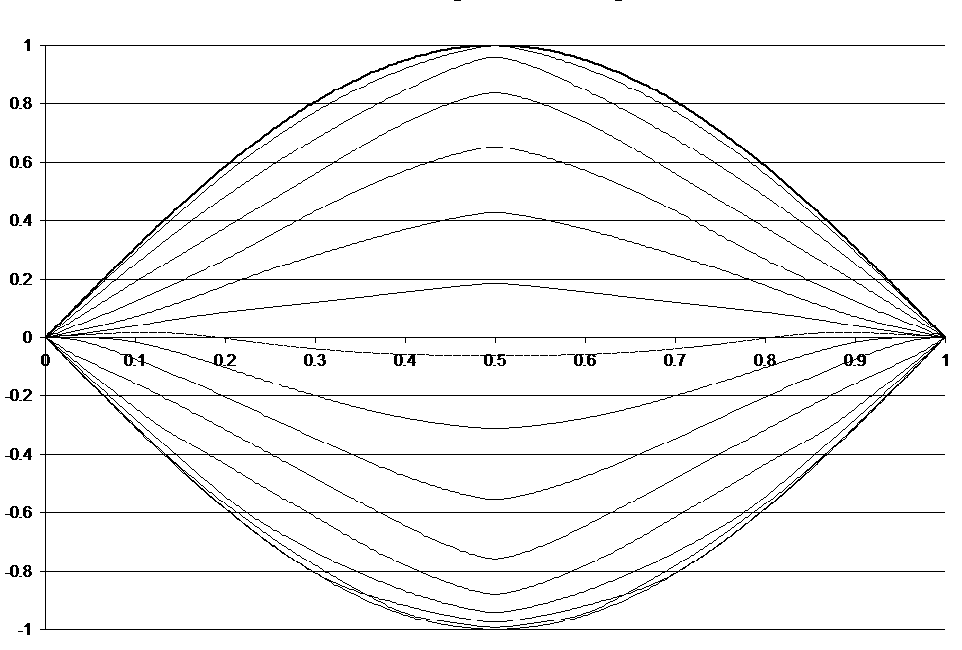}%
\caption{Motion of a string that was non-taut and non-slack at rest.}%
\label{nonTaut}%
\end{center}
\end{figure}

\end{center}

Figure \ref{nonTaut} shows a numerical solution to the general equation for an
initially non-taut, non-slack string. The initial conditions were a half sine
wave for the shape and zero initial velocity. The numerical solver seems to be
reliable since it approaches well to a half sine wave in the negative region
even after numerous iteration steps in time. The shape of the curve near zero
line deviates significantly from a sine function, which, however, is perfectly
recovered at the end of \ half period.

\subsection{Twisted loop: a self-intersecting shape}%

\begin{figure}
[h]
\begin{center}
\includegraphics[scale=0.5]%
{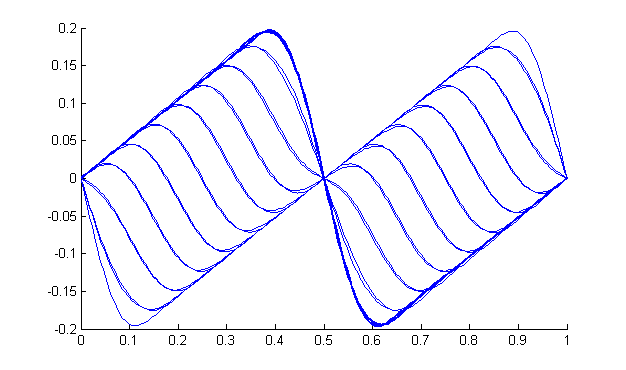}%
\caption{Horizontal motion for the twisted loop initial condition. The darker
curve corresponds to $u_{0}$ and $u(x,T/2)$, where $T$ is the period.}%
\label{twistloopu0}%
\end{center}
\end{figure}
%

\begin{figure}
[h]
\begin{center}
\includegraphics[scale=0.5]%
{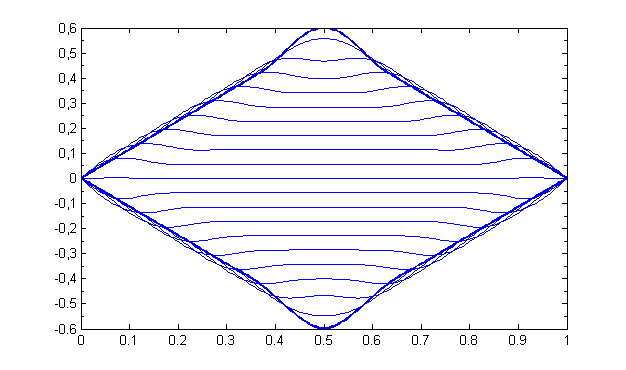}%
\caption{Vertical motion for the twisted loop initial condition. The darker
curve corresponds to $v_{0}$ and $v(x,T/2)$, where $T$ is the period}%
\label{twistloopv0}%
\end{center}
\end{figure}

Another interesting initial condition is obtained by the initial conditions
depicted in figures \ref{twistloopu0} and \ref{twistloopv0}. The graph of
these is given in Figure \ref{twistloopgraph}. The string is allowed to
self-intersect. It is interesting that the period of $u(x,t)$ is half those of
$v(x,t)$ and the graphed motion. The formation of kinks in the string shape is
due to tangent points of lines with the circle in the initial conditions, at
which there are curvature discontinuities. These solutions are based on the
case of $k=0$. The Euler solutions for individual wave equations and finite
element based solutions were both obtained for comparison. They were identical
up to numerical accuracies.%
\begin{figure}
[h]
\begin{center}
\includegraphics[scale=0.5]%
{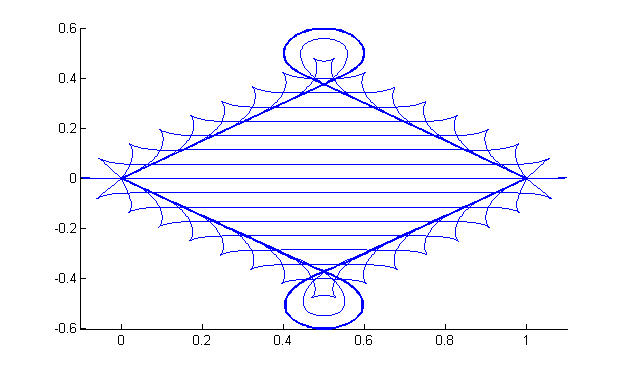}%
\caption{2D motion for the twisted loop initial condition. The darker curves
correspond to the initial shape and the shape after half a period.}%
\label{twistloopgraph}%
\end{center}
\end{figure}

\section{HIGH-SPEED CAMERA EXPERIMENTS}

In order to demonstrate the soundness of the theory and the numerical
investigations, a real experiment is conducted using everyday items, except
for a high-speed camera.

For the string a common round elastic string with latex fibers is used, which
proved to be less than ideal due to its high internal damping. Nevertheless,
the first few milliseconds of the motion were sufficient to make a point, at
least qualitatively.

The string was stretched between two strong posts approximately 2500 mm apart
as measured from the boundaries of the string. A white marker is applied to a
small portion of the string, 843 mm from the left end (with respect to the
camera), which was used to trace the motion of a material point. The string
was pulled up and dragged horizontally away from the left end such that it was
displaced by about 100 mm in both directions. Then, the camera was turned on
and the string was released.

The camera used was SpeedCam (MiniVis \#00655 v1.7.36) with a resolution of
640x512 pixels. The video capture speed was 200 Hz (fps), corresponding to a
period of 5 ms.

The resulting image frames were then analyzed using image processing
techniques. The first 21 frames after the start of motion were processed. The
first frame was used as a substrate onto which the consequent ones were
superposed after removing the string image except the marker (Figure
\ref{CamResult}).%

\begin{figure}
[h]
\begin{center}
\includegraphics[scale=0.5]%
{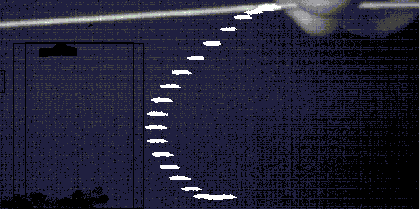}%
\caption{Motion of the marker in the first 105 ms.}%
\label{CamResult}%
\end{center}
\end{figure}

A full motion trace was also performed in order to indicate the effect of
damping (Figures \ref{FullTraceExp}\ and \ref{FullTraceNum}). The horizontal
motion decayed much faster then the vertical motion, due to its faster speed.

In order to be able to compare the experimental results to those obtained from numerical analyses viscous damping terms were introduced to both PDEs. The damping ratios and coefficient were estimated from the camera experiments. Under these conditions the results of the experiment seemed to be quite similar to those of the
numerical except for the fact that the real string followed a smoother path.
This is due to the effect of bending that was neglected in
numerical experiments.%

\begin{figure}
[h]
\begin{center}
\includegraphics[scale=0.4]%
{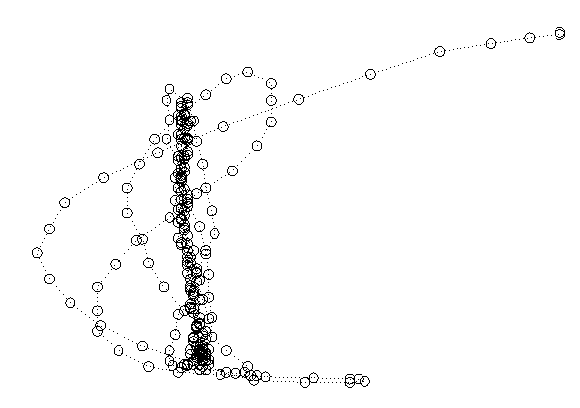}%
\caption{Full trace of the marker in camera experiment.}%
\label{FullTraceExp}%
\end{center}
\end{figure}
%

\begin{figure}
[h]
\begin{center}
\includegraphics[scale=0.5]%
{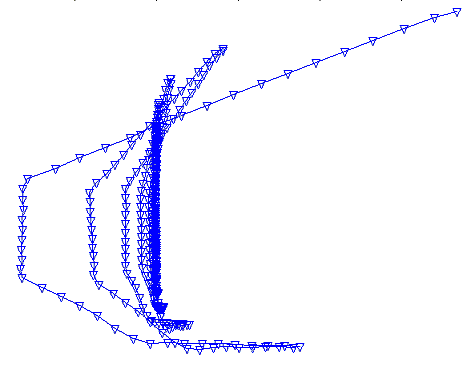}%
\caption{Full trace of the marker in numerical solutions.}%
\label{FullTraceNum}%
\end{center}
\end{figure}

As for the other parameters, the following are measured, albeit roughly.%

\begin{tabular}
[c]{rl}%
Diameter: & $2.6$ mm\\
Linear density: & $4.14\times10^{-3}$ kg/m\\
Volumetric density: & $779$ kg/m$^{3}$\\
Spring constant: & 30 N/m\\
Elastic Modulus: & 3.5 MPa\\
$T_{f}$ and $T_{0}$: & 18.5 N and 2.3 N\\
$k$: & 0.89\\
$c_{f0}$: & 71 m/s
\end{tabular}

The frequency of horizontal motion is measured as 10.5 Hz, whereas that for
the vertical was about 5.2 Hz. The horizontal motion was almost twice as fast
as the vertical, as was observed in numerical experiments. Whether or not this
was a numerical artifact could not be ascertained.

\section{CONCLUSION}

It is shown that the classical vibrating string model is not defendable due to
lack of accounting for horizontal motions, which are unavoidable except in
extremely rare situations. A new 2D motion model, as a pair of non-linear
PDEs, is developed in order to account for such shortcomings. Model allows a
variety of initial conditions that have no counterparts in the classical
model. The most important features of the model are: a) finite displacements,
b) variable tension and length, c) reduction to the classical model and others
in limit cases, d) handling of previously indescribable cases of
non-taut/non-slack strings, self-intersecting shapes, and so on.

Another new result is the determination of initial displacement gradients for
strings that are initially allowed to have uniform tension. It is also shown
that pure transverse motions are possible only when the string has zero
cross-sectional area or zero elastic modulus, the initial tension is uniform
(i.e. zero initial horizontal forces), and the initial shape is formed by
straight segments around kink points such that the segment slopes on each side
of kinks are equal and opposite.

Various numerical experiments are conducted on different platforms for
verification of the analytic models, which seem to agree with each other to a
high degree of accuracy. High-speed camera experiments were conducted that
seemed to unequivocally support the new theory.

\pagebreak\bigskip

\section{REFERENCES}


\begin{thebibliography}{99}                                                                                               %


\bibitem {Dalembert1}D'Alembert, J. L. (1747). Recherches Sur la Courbe Que
Forme Une Corde Tendu\"{e} Mise en Vibration. Hist. de l'Acad. Roy. de Berlin,
3, pp. 214-219.

\bibitem {Dalembert2}D'Alembert, J. L. (1747). Suite des Recherches Sur la
Courbe Que Forme Une Corde Tendu\"{e} Mise en Vibration. Hist. de l'Acad. Roy.
de Berlin, 3, pp. 220-249.

\bibitem {Euler}Euler, L. (1748). Sur la Vibration des Cordes. Hist. de
l'Acad. Roy. de Berlin, 4, pp. 69-85.

\bibitem {BernoulliD1732}Bernoulli, D. (1732/1733). Theoremata de
Oscillationibus Corporum Filo Flexili Connexorum et Catenae Verticaliter
Suspensae. Comentarii Academiae Scientiarum Imperialis Petropolitanae, 6, pp. 108-122.

\bibitem {BernoulliD1740}Bernoulli, D. (1740). De Oscillationibus Composites
Praesertim Iss Quae Flunt in Corporibus Ex Filo Flexili Suspensis. Comentarii
Academiae Scientiarum Imperialis Petropolitanae, 12, pp. 97-108.

\bibitem {BernoulliD1753}Bernoulli, D. (1753). R\'{e}flexions et
\'{E}claircissemens Sur les Nouvelles Vibrations des Cordes. Hist. de l'Acad.
Roy. de Berlin, 9, pp. 147-195.

\bibitem {BernoulliJ1732}Bernoulli, J. (1728). Meditationes de Chordis
Vibrantibus. Comentarii Academiae Scientiarum Imperialis Petropolitanae. 3,
pp. 13-28. (Translated by NASA: \textit{Deliberations On Oscillating Strings},
NASA TT F-9515, 1965)

\bibitem {Fourier}Fourier, J. B. (1819). Th\'{e}orie du Mouvement de la
Chaleur Dans les Corps Solides. M\'{e}m. de l'Acad. Roy. des Sci. de l'Inst.
de France, 4, pp. 185-556.

\bibitem {Dirichlet}Dirichlet, P. G. (1829). Sur la Convergence des S\'{e}ries
Trigonom\'{e}triques Qui Servent \`{A} Repr\'{e}senter Une Fonction Arbitraire
Entre des Limites Donn\'{e}es. J. f\"{u}r Math., 4, 157-169. Retrieved from arXiv:0806.1294v1.

\bibitem {Wheeler}Wheeler, G. F. and Crummett, W. P. (1987). The Vibrating
String Controversy. Am. J. Phys., 55(1), 33-37.

\bibitem {Zeeman}Zeeman, E. C. (1993). Controversy in Science: on the Ideas of
Daniel Bernoulli and Ren\'{e} Thom. Nieuw Arch. Wisk. (4), 11(3), pp. 257-282.

\bibitem {Armstead2006}Armstead, D. C. and Karls, M. A. (2006). Does the Wave
Equation Really Work?, PRIMUS, 16 (2), pp. 162-177.
{\small http://dx.doi.org/10.1080/10511970608984144}.

\bibitem {Benson2007}Benson, D. J., Music: A Mathematical Offering, 2007,
Cambridge University Press, UK and USA.

\bibitem {Byron1992}Byron, F. W. and Fuller, R. W., Mathematics of Classical
and Quantum Physics, 1992, Dover Pub. (Original publications, 1969, 1970:
General Publishing Co., Canada; and Constable and Company, UK).

\bibitem {Ciblak2013}Ciblak, N. (2013). Arbitrarily Large Motion Model For
Transverse Vibrations of a String. arXiv:1310.1019v1.

\bibitem {Gottlieb1990}Gottlieb, H. P. W., (1990), Non-linear Vibration of a
Constant-Tension String, J. Sound Vib., v. 143, pp 455-460.

\bibitem {Kreysig1983}Kreysig, E., Advanced Engineering Mathematics, 1983, New
York: John Wiley and Sons, 5th ed.

\bibitem {Lagrange}Lagrange, J.-L. (1759). Recherches Sur la Nature et la
Propagation du Son. Miscell. Taurin., 1, pp. 39-148.

\bibitem {Lai2008}Lai, S. K., Xiang, Y., et al., 2008, Higher-Order
Approximate Solutions For Non-linear Vibration of a Constant-Tension String,
J. Sound Vib., v.317, pp 440-448.

\bibitem {Powers}Powers, D. L., Boundary Value Problems. New York: Saunders
College, 3rd ed.

\bibitem {Rao2002}Rao, G. V., 2002, Moderately Large Amplitude Vibrations of a
Constant Tension String: a Numerical Experiment, J. Sound Vib., v. 263, pp 227-232.

\bibitem {Sagan1961}Sagan, H., Boundary and Eigenvalue Problems in
Mathematical Physics, 1989, Dover Pub. (Original publication, 1961, Wiley)

\bibitem {Taylor}Taylor, B. (1713). De Motu Nervi Tensi. Phil. Trans., 28, pp. 26-32.

\bibitem {Weinstock1974}Weinstock, R., Calculus of Variations, 1974, Dover
Pub. (Original publications, 1952: General Publishing Co., Canada; and,
Constable and Company, UK)
\end{thebibliography}
\end{document}